\documentclass[prx, aps,twocolumn,superscriptaddress,floatfix, 10pt]{revtex4-2}
\usepackage{amsthm}
\usepackage{graphicx}
\usepackage{physics}
\usepackage{amsmath,amssymb,amsthm}
\usepackage{graphicx}
\usepackage{stmaryrd}
\usepackage{booktabs}
\usepackage[table]{xcolor}
\usepackage{makecell}
\usepackage{array}
\usepackage[linesnumbered, ruled, vlined]{algorithm2e}

\theoremstyle{plain}
\newtheorem{theorem}{Theorem}

\newtheorem{corollary}[theorem]{Corollary}
\newtheorem{proposition}[theorem]{Proposition}
\newtheorem{example}{Example}

\usepackage{enumitem}

\theoremstyle{definition}
\newtheorem{definition}[theorem]{Definition}

\theoremstyle{remark}
\newtheorem{remark}[theorem]{Remark}

\theoremstyle{remark}

\usepackage{bm, physics}% bold math
\usepackage{tikz-cd}
\usepackage{mathtools}

\usepackage{color}

\newcommand\Z{\mathbb{Z}}
\newcommand\F{\mathbb{F}}

\renewcommand\Im{\operatorname{im}}
\renewcommand\rank{\operatorname{rank}}

\let\svthefootnote\thefootnote
\newcommand\freefootnote[1]{%
  \let\thefootnote\relax%
  \footnotetext{#1}%
  \let\thefootnote\svthefootnote%
}
\usepackage[colorlinks=true,hyperindex,breaklinks,
            linkcolor=blue,urlcolor=blue,citecolor=blue]{hyperref}
            
\begin{document}

% Or maybe Extending?
\title{Lifting Lifted Product Codes}

\author{Yuta Hirasaki}
\email{yutah2@illinois.edu}
\affiliation{Department of Physics, Illinois Quantum Information Science and Technology Center, and Institute of Condensed Matter Theory, The Grainger College of Engineering, University of Illinois at Urbana-Champaign, Urbana, Illinois 61801, USA}

\author{Jong Yeon Lee}
\email{jongyeon@illinois.edu}
\affiliation{Department of Physics, Illinois Quantum Information Science and Technology Center, and Institute of Condensed Matter Theory, The Grainger College of Engineering, University of Illinois at Urbana-Champaign, Urbana, Illinois 61801, USA}
\affiliation{Korea Institute for Advanced Study, Seoul 02455, South Korea}

\date{\today}

\begin{abstract}
    Lifted product (LP) codes form an important class of quantum error correcting codes with favorable code parameters. We introduce a systematic construction of LP code families based on group extensions and graph lifts. The construction increases the code size while preserving the local structure of the Tanner graph, and relates code parameters, logical operators, and fault-tolerant logical-operation gadgets within the families through chain and cochain maps.

    As a first application, we obtain LP codes with better code parameters than previously reported ones. 
    We then demonstrate that code-surgery gadgets can be transferred across the selected finite lifts through chain maps and, in several cases, implemented with lower space overhead. 
    We also develop parallel product surgery for \emph{lifted} clustered cyclic codes.
    Finally, we propose lifting as a systematic first step toward defining thermodynamic families for algebraically defined qLDPC codes without an underlying Euclidean lattice. For several base codes and selected lifts, coherent information exhibits finite-size crossings, while our results also indicate that additional conditions are needed to determine a unique family.
\end{abstract}

\pacs{}% insert suggested PACS numbers in braces on next line

\maketitle %\maketitle must follow title, authors, abstract and \pacs

\section{Introduction}

Quantum information is fragile against noise and extensive studies have been conducted for quantum error correcting (QEC) codes~\cite{CalderbankShor1996, Steane1996}. In particular, quantum low-density parity check (qLDPC) codes~\cite{PRXQuantum.2.040101} have been central to these studies. Among them, lifted product (LP) codes~\cite{Panteleev2021, Panteleev2022, Panteleev2021a} are an important class of qLDPC codes as they give a sequence of asymptotically good qLDPC codes~\cite{Panteleev2021a}, as well as finite-size instances with good performance~\cite{Xu2024, cain2026shorsalgorithmpossible10000, Bravyi2024}, such as quasi-cyclic~\cite{Xu2024, Panteleev2022}, bivariate bicycle (BB) codes~\cite{Bravyi2024, rmy6-9n89}, and many others~\cite{67xf-zdjb,wang2023abeliannonabelianquantumtwoblock, PhysRevA.109.022407, PhysRevA.88.012311}.

\begin{figure}[!t]
    \centering
    \includegraphics[width=1.0\linewidth]{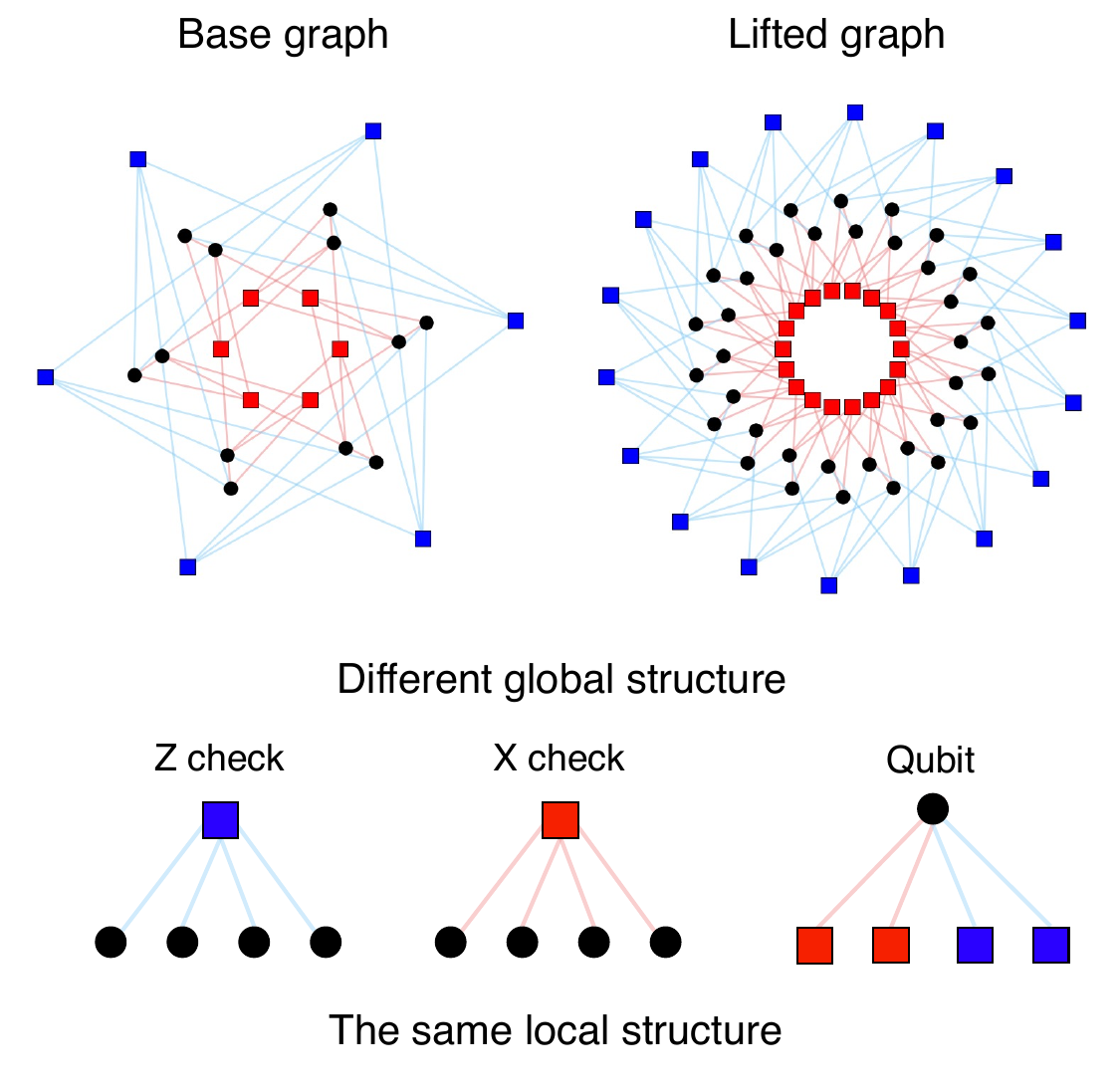}
    \caption{\textbf{Schematic picture of a lift of a Tanner graph.} The graphs show the Tanner graphs of a base and a lifted LP code. Black circles denote qubits, while blue and red squares denote $Z$ and $X$ checks, respectively. Although the two Tanner graphs have different global structures, they share the same local structure. The base (lifted) LP code is a cyclic code defined by the cyclic group $\Z_6$ 
    $(\Z_{18})$, respectively. The boundary maps are given by $H_A = \begin{pmatrix}
        x + x^2
    \end{pmatrix}$ and $H_B = \begin{pmatrix}
        x + x^{-1}
    \end{pmatrix}$ in both cases. The code parameters are $\llbracket12, 2, 3\rrbracket$ and $\llbracket36, 2, 3\rrbracket$, respectively.}
    \label{fig:LiftSchematicPic}
\end{figure}

In this work, we provide a systematic method for generating a sequence of LP codes from a given base LP code such that their Tanner graphs~\cite{Tanner1981} preserve the local incidence structure of the base Tanner graph. Put differently, we develop a framework for further lifting LP codes; see Fig.~\ref{fig:LiftSchematicPic} for an illustration. Although LP codes were originally introduced~\cite{Panteleev2022,Panteleev2021a,Panteleev2021} as lifts of hypergraph product codes~\cite{TillichZemor2014,PRXQuantum.2.040101}, explicit methods for further lifting a general LP code have not been developed beyond abstract constructions for arbitrary CSS codes~\cite{guemard2024liftsquantumcsscodes,guemard2025liftingcsscodehandlebody} or constructions restricted to particular subfamilies of LP codes~\cite{symons2025sequences}. Our approach is purely algebraic, and can be applied for any LP code.

As applications, we first show that our lifting protocol gives a systematic route to code search and can produce codes with improved parameters.
Sequences of BB codes obtained by lifting a BB code were previously studied in Ref.~\cite{symons2025sequences}, but that approach only explores the instances that remain within the BB family. In contrast, our framework applies to LP codes more generally, with BB codes appearing as a special case. 
Using this broader approach, we find lifted BB codes with improved parameters.
We further apply the same lifting procedure to recently proposed clustered cyclic (CC) codes~\cite{gu2026qgpuparallellogicquantum}, obtaining instances with enhanced parameters. 
We also discuss bounds on the parameters of lifted codes; in particular, in several settings the parameters of the lift are lower bounded by those of the base code~\cite{symons2025sequences}. Consequently, lifting a base code with favorable parameters provides a useful ansatz for identifying improved codes within a reduced search space.

Furthermore, we discuss the logical operation gadgets for lifted LP codes. Specifically, we show that the code surgery gadgets~\cite{Horsman2012LatticeSurgery, CowtanBurton2024, Cowtan2024SSIP, zheng2025highratesurgeryconstantoverheadlogical, Williamson2026, cross2025improvedqldpcsurgerylogical,yuan2026parsimoniousquantumlowdensityparitycheck, he2025extractorsqldpcarchitecturesefficient, PhysRevLett.134.070602, baspin2025fastsurgeryquantumldpc, gj8x-n5gg, doi:10.1126/sciadv.abn1717, PhysRevX.15.021088} are transferred  across the lifted family through chain maps induced by a lift, and we demonstrate that one can construct a surgery ancilla code with a smaller space overhead by utilizing the lifted structure of LP codes. We also identify the conditions under which lifted CC codes retain the same logical addressability under parallel product surgery as their base CC codes~\cite{gu2026qgpuparallellogicquantum}. We show that the lifted CC codes reported here preserve the relevant logical-operation capabilities.

Finally, we investigate decoherence-induced phase transitions~\cite{Lee2025symmetryprotected, LeePRXQ2023, PRXQuantum.5.020343, hlfh-86yz,PhysRevA.111.032402, PhysRevResearch.6.L042014, fx56-8nvy, vijay2025informationcriticalphasesdecoherence, yangPRX2026} of lifted LP codes. For qLDPC codes defined by abstract chain complexes or general graphs, it is often unclear how to construct a \emph{thermodynamic family}; a sequence of increasing system sizes that exhibits the same asymptotic critical behavior. We address this issue using lifts, which provide a sequence of increasingly large codes that preserve the local structure of a base code. We calculate the coherent information~\cite{PhysRevA.54.2629,Horodecki2006} under bit-flip noise for the lifted LP codes and observe finite-size crossings, suggesting that the selected lifted codes exhibit common asymptotic behavior. We also discuss the caveats of using local structure alone to define a thermodynamic family.

\vspace{5pt} {\bf Relation to prior works.} 
Recently, Guemard introduced a general notion of lifting CSS codes by associating to any CSS code a two-dimensional \emph{Tanner cone-complex} and defining lifted codes through finite coverings of this complex~\cite{guemard2024liftsquantumcsscodes}. Connected finite lifts are classified by finite-index subgroups of the fundamental group. While this provides a conceptually general classification, systematic code search requires determining the fundamental group and exploring its finite-index subgroups, which can become computationally difficult for generic codes. 
In addition, the proposed construction does not necessarily result in an LP code even when lifting an LP code. 
A lifting construction tailored to quantum Tanner codes was subsequently developed by taking coverings of the underlying square complex~\cite{guemard2025moderate}.

Indeed, for a CSS code with additional structures, finding a lift can be easier. In Ref.~\cite{symons2025sequences}, given a BB code, the authors find a way to construct a sequence of BB codes that cover the base BB code. They give a condition under which a BB code becomes a cover of another BB code, and search for lifted BB codes. Recently, Ref.~\cite{aydin2026breakingbicycleframecosetbased} has generalized this to two-block group algebra codes.

Our work moves one step further by finding a way to systematically lift any LP code over an \emph{arbitrary group}, which includes BB codes, via group extension. In this general construction, BB-preserving lifts are a special case, and even if we choose our base code to be a BB code, we can obtain a non-BB `further' lifted code with code parameters improved compared to BB family.

\vspace{5pt}
This paper is organized as follows. In Sec.~\ref{Sec:LiftingLPcodes}, we develop an algebraic framework for lifting LP codes and discuss the induced chain and cochain maps, together with bounds on the parameters of the lifted codes. In Sec.~\ref{Sec:CodeSearch}, we present finite-size examples of lifted LP codes obtained using our framework. In Sec.~\ref{Sec:LogicalGadget}, we study logical-operation gadgets for lifted LP codes and show, in particular, that the lifted structure can reduce the space overhead of code surgery. In Sec.~\ref{Sec:CI}, we argue that lifts can be used to construct thermodynamic families of LP codes, and demonstrate finite-size crossings of coherent information under decoherence. Finally, in Sec.~\ref{Sec:conclusion_outlook}, we summarize our results and discuss future directions.

% \tableofcontents

\section{Lift, Projection, and Chain/cochain maps}\label{Sec:LiftingLPcodes}

A graph lift enlarges a graph while preserving the local incidence structure around every vertex. Formally, a graph $\mathcal{G}'$ is a lift of $\mathcal{G}$ if there is a surjective graph map $p:\mathcal{G}'\to\mathcal{G}$ that induces a bijection between the edges incident to $v'$ and those incident to $p(v')$ for every $v'\in V(\mathcal{G}')$~\footnote{More precisely, the neighborhood of $v$ should refer to the set of \emph{edges} incident to $v$, not the set of vertices in general. This distinction matters when multiple edges connect the same pair of vertices. For Tanner graphs considered here, which are simple undirected graphs, the two descriptions are equivalent. We use the vertex-based definition in the main text because it provides a more intuitive interpretation of the thermodynamic discussion in Sec.~\ref{Sec:CI}.}. If every vertex of $\mathcal{G}$ has exactly $t$ preimages, then $\mathcal{G}'$ is called a $t$-lift of $\mathcal{G}$. As illustrated in Fig.~\ref{fig:LiftSchematicPic}, a lift preserves the local Tanner-graph incidence structure while allowing its global structure to change.

Our goal is to provide a systematic way to produce a sequence of LP codes from a given base LP code by lift. We take an algebraic approach, and we obtain admissible lifts by exploiting a group extension
\begin{align}\label{Eq:group_extension}
    1\longrightarrow K\xlongrightarrow{\chi} H\xlongrightarrow{\pi}G\longrightarrow 1.
\end{align}
Starting from an LP code over $\F_2[G]$, we lift the two underlying classical complexes to $\F_2[H]$ and then take their balanced product. The resulting LP code has a Tanner graph~\cite{Tanner1981} that is a $|K|$-lift of the original Tanner graph. Conversely, the homomorphism $\pi$ projects an LP code over $\F_2[H]$ to a smaller LP code over $\F_2[G]$. These constructions are summarized in Fig.~\ref{Fig:LiftbySES}. The same projection also induces chain and cochain maps relating the two codes.

In Sec.~\ref{Subsec:LiftingLPcodes}, we construct the forward lift and prove that it gives a lift of a Tanner graph. In Sec.~\ref{Subsec:projectingLPcodes}, we define the inverse projection to a quotient group. Finally, in Sec.~\ref{Subsec:ChainCochainMaps}, we discuss the induced transfer maps and use them to relate logical operators and code parameters across the lifted family. These maps will provide the algebraic mechanism for transferring logical-operation gadgets from a projected code to its lift in Sec.~\ref{Sec:LogicalGadget}.

\begin{figure}
    \centering
    \includegraphics[width=1.0\linewidth]{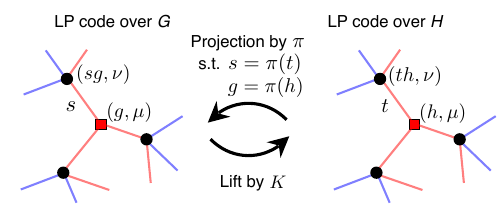}
    \caption{\textbf{Local structures of lift and projection.} The left (right) figure shows the local structure of the Tanner graph of an LP code over a group $G$ ($H$). The red squares denote $X$-checks, and the black dots denote the qubits. Each vertex is labeled by a pair of a group element and an index labeling a basis element of the corresponding module: for example $(g, \mu)$ in the figure, where $g\in G$ and $\mu$ indexes a basis element in $A_1\otimes B_1$. Edges are generated by the group elements with nonzero coefficients, i.e., \emph{supported group elements} in the boundary maps. Specifically, take the $(\mu, \nu)$-entry, and let $S_{\mu\nu}\subseteq G$ be the set of the supported group elements. An element $s\in S_{\mu\nu}$ connects $(g,\mu)$ to $(sg,\nu)$ for every $g\in G$. The Tanner graph of the LP code over $H$ is described analogously using edge generating sets $T_{\mu\nu}$. When the homomorphism $\pi$ in Eq.~\eqref{Eq:group_extension} satisfies the relation $s = \pi(t)$ for every edge generator $s$ and $t$, $\pi$ induces a projection map between the two Tanner graphs and the LP code over $H$ becomes a lift of the LP code over $G$.}
    \label{Fig:LiftbySES}
\end{figure}

\subsection{Lift via group extension}\label{Subsec:LiftingLPcodes}

 Consider an LP code $Q$ over the group algebra $R_G \coloneqq \F_2[G]$, constructed from a pair of classical codes $A_\bullet$ and $B_\bullet$. 
\begin{align}\label{Eq:LPClassicalCodes}
    A_1\xlongrightarrow{\partial_A}A_0, \quad B_1\xlongrightarrow{\partial_B} B_0,
\end{align}
where $A_1, A_0, B_1, B_0$ are free modules over $R_G$. 
For $A_i \simeq R_G^{n_i^A}$, we call $n_i^A$ the \emph{rank} of $A_i$.
When $G$ is non-abelian, $R_G$ is non-commutative, so the handedness of the module structures must be specified. We take ${A}_0,{A}_1$ to be free right $R_G$-modules and ${B}_0,{B}_1$ to be free left $R_G$-modules. Accordingly, the boundary maps are $R_G$-linear in the corresponding sense
\begin{align}
{\partial}_A(a r)={\partial}_A(a)r,
\qquad
{\partial}_B(r b)=r{\partial}_B(b),
\end{align}
for $r\in R_G$, $a\in A_1$, and $b\in B_1$. Taking the balanced product~\cite{9490244} of these two chain complexes, we obtain the LP code $Q$ described by the following length-two chain complex:
\begin{equation}
    \begin{tikzcd}
    & A_1\otimes B_0 & \\
    {A_1\otimes B_1} & & A_0\otimes B_0\\
     & A_0\otimes B_1 &
    \arrow["{I\otimes \partial_B}", from=2-1, to=1-2]
    \arrow["{\partial_A\otimes I}", from=1-2, to=2-3]
     \arrow["{\partial_A\otimes I}", from=2-1, to=3-2]
     \arrow["{I \otimes \partial_B}", from=3-2, to=2-3]
    \end{tikzcd}.
\end{equation}

We lift $Q$ using a group extension. Suppose that $G$ is extended by a kernel group $K$ to a group $H$, with the homomorphisms $\chi$ and $\pi$ defined by the short exact sequence in Eq.~\eqref{Eq:group_extension}. Note that the homomorphism $\pi$ is surjective. For each fiber $\{ h \,|\, \pi(h) = g \}$, we choose a representative of $g$ in $H$, denoted by the \emph{section} $\sigma: G\to H$ such that
\begin{align}
    (\pi\circ\sigma)(g) = g,\quad \forall g\in G.
\end{align}
Here $\sigma$ need not be a group homomorphism. With the choice of $\sigma$, one can construct the following map
\begin{align}\label{Eq:BijectionofSES}
    F: K\times G\to H;\;\;(\gamma, g) \mapsto \chi(\gamma)\sigma(g),
\end{align}
which is bijective. For surjectivity, for any $h \in H$ with $\pi(h) = g$, note that $\pi(h \sigma(g)^{-1}) = e_G$. Therefore, $h \sigma(g)^{-1} \in \ker \pi = \textrm{im} \chi$. Hence there exists $\gamma \in K$ such that $\chi(\gamma) = h \sigma(g)^{-1}$, i.e., $h = \chi(\gamma) \sigma(g)$. For injectivity, suppose $\chi(\gamma) \sigma(g) = \chi(\gamma') \sigma(g')$. Applying the group homomorphism $\pi$, we obtain $g = g'$. This indicates that $\chi(\gamma) = \chi(\gamma')$. However, since $\chi$ is injective, $\gamma = \gamma'$. 

To proceed, we first lift the base classical chain complexes $A_\bullet$ and $B_\bullet$ over $R_G$ to chain complexes $\tilde{A}_\bullet$ and $\tilde{B}_\bullet$ over $R_H \coloneqq \F_2[H]$:
\begin{align}\label{Eq:LiftofClassicalCodes}
\widetilde{A}_1\xlongrightarrow{\widetilde{\partial}_A}\widetilde{A}_0, \quad \widetilde{B}_1\xlongrightarrow{\widetilde{\partial}_B} \widetilde{B}_0.
\end{align}
When $H$ is non-abelian, the handedness of the modules is defined in the same way as $Q$. The boundary maps are defined by the following procedure.

Take the $(i, j)$-entry of the boundary maps $\partial_A$:
\begin{align}
    (\partial_A)_{ij} = \sum_{g\in G}a_{ij, g} g, \quad a_{ij, g} \in \mathbb{F}_2,
\end{align}
where $i \in [\rank_{R_G} A_0]$ and $j\in[\rank_{R_G} A_1]$. Here $[n] = \{0, 1, \dots, n-1\}$ for a positive integer $n$. For each group element $g$ with $a_{ij, g} = 1$ , the fiber $\pi^{-1}(g)$ contains exactly $|K|$ elements. We choose one such element and denote the chosen lift of $g$ at position $(i,j)$ by $h_{ij}(g)\in H$. Given $(\chi,\pi,\sigma)$, this choice is uniquely specified by an element $\gamma_{ij}(g)\in K$ through
\begin{align}
    h_{ij}(g) := F(\gamma_{ij}(g), g) =  \chi(\gamma_{ij}(g)) \sigma(g).
\end{align}
The lifted boundary map $\widetilde{\partial}_A$ is then given by
\begin{align}\label{Eq:LiftedBoundaryMapofA}
    (\widetilde{\partial}_A)_{ij} = \sum_{g\in G}a_{ij, g}\chi(\gamma_{ij}(g))\sigma(g)\in R_H,
\end{align}
for $i\in [\rank_{R_G} A_0]$ and $j\in [\rank_{R_G} A_1]$. Note that $\rank_{R_H}\widetilde{A}_0 = \rank_{R_G} A_0$ and $\rank_{R_H}\widetilde{A}_1 = \rank_{R_G} A_1$ by construction. The classical code $B_\bullet$ is lifted analogously.

In this way, we obtain $\widetilde{A}_\bullet$ and $\widetilde{B}_\bullet$. By taking the balanced product of these two over $R_H$, we obtain an LP code $\widetilde{Q}$. One can show that $\widetilde{Q}$ obtained is a $|K|$-lift of $Q$. 
\begin{proposition}[Lift of an LP code]\label{theorem:liftingLPbyGroupExtensions}
    The Tanner graph of the LP code $\widetilde{Q}$ obtained above is a $|K|$-lift of the Tanner graph of the original LP code $Q$. 
\end{proposition}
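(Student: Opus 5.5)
My plan is to write down both Tanner graphs explicitly and exhibit the covering map $p$ induced by $\pi$. The vertices of the Tanner graph of $Q$ (qubits, $X$-checks, $Z$-checks) are the $\F_2$-basis elements of the four modules of the balanced product. Since $A_i$ is a free right $R_G$-module and $B_j$ a free left $R_G$-module, we have $A_i\otimes_{R_G} B_j \simeq R_G^{n_i^A n_j^B}$, with $\F_2$-basis $\{(g,\mu)\}$, where $g\in G$ and $\mu=(a,b)$ indexes a pair of basis elements. The same holds for $\widetilde Q$ with $h\in H$, and the index sets $\mu$ coincide because the ranks are preserved by construction. I will then compute the boundary maps in this basis. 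For instance, for $\partial_A\otimes I$, the $(\mu,\nu)$ block is multiplication by a ring element (on the left or right, depending on the convention). It is therefore a sum of supported group elements $s\in S_{\mu\nu}$, and each $s$ connects $(g,\mu)$ to $(sg,\nu)$, as described in the caption of Fig.~\ref{Fig:LiftbySES}. The block for $I\otimes\partial_B$ is handled analogously, with multiplication on the other side. The key bookkeeping is to fix one convention for identifying $a\otimes g\otimes b$ with a basis vector, so that both factors act by left and right multiplication respectively.

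Next I will define $p(h,\mu)=(\pi(h),\mu)$. Each vertex then has exactly $|K|$ preimages, since the fiber of $\pi$ has size $|K|$ (by the bijection $F$ of Eq.~\eqref{Eq:BijectionofSES}). By construction, the supported elements $T_{\mu\nu}$ of $\widetilde Q$ are exactly the chosen lifts $h_{ij}(s)$ for $s\in S_{\mu\nu}$. In particular, $\pi$ restricts to a bijection $T_{\mu\nu}\to S_{\mu\nu}$: it is surjective because every $s$ has a chosen lift, and injective because distinct $s$ have disjoint fibers. Because $\pi$ is a homomorphism, an edge $(h,\mu)\!-\!(th,\nu)$ maps to $(\pi(h),\mu)\!-\!(\pi(t)\pi(h),\nu)$, which is an edge of $Q$. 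So $p$ is a graph map, and it is surjective on both vertices and edges.

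The remaining step is the local bijection: the edges at $(h,\mu)$ are indexed by pairs $(\nu,t)$ with $t\in T_{\mu\nu}$ (together with the reversed-direction edges $t\in T_{\nu\mu}$, whose neighbor is $t^{-1}h$), and $p$ sends them to pairs $(\nu,\pi(t))$ indexing the edges at $(\pi(h),\mu)$. This map is a bijection by the previous paragraph. Since the Tanner graphs are simple, distinct pairs give distinct neighbors; otherwise I would work with edges directly, as in the footnote.

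The main obstacle I expect is the careful handling of the non-commutative balanced product: I must verify that the generators from $\partial_A$ act on one side and those from $\partial_B$ on the other, and that the statement ``$\pi$ commutes with the multiplication'' holds on both sides. Both facts follow from $\pi$ being a ring homomorphism $R_H\to R_G$ that sends the lifted entries $(\widetilde\partial)_{ij}$ to $(\partial)_{ij}$. A second subtlety is cancellation: distinct lifts never collide in $\F_2$ coefficients, because they lie in distinct fibers. This is exactly what makes the supports correspond bijectively, and it is what guarantees that no edge disappears when we pass to $\widetilde Q$.
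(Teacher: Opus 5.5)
Your proposal is correct and follows essentially the same route as the paper's proof in Appendix~\ref{Sec:ProvingLiftingTheorem}: you project vertices via $(h,\mu)\mapsto(\pi(h),\mu)$, use the bijection $F$ to get fibers of size $|K|$, and obtain local bijectivity from the one-to-one correspondence between supported terms $s\in S_{\mu\nu}$ and their chosen lifts, on both the left- and right-multiplication sides. Your observation that distinct lifts lie in distinct fibers makes explicit a point the paper uses only implicitly: no cancellation occurs in $\widetilde{\partial}_A,\widetilde{\partial}_B$, so the supports correspond bijectively.
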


\begin{proof}
    The complete proof is provided in Appendix~\ref{Sec:ProvingLiftingTheorem}, so we illustrate the idea of the proof here. We first show that the homomorphism $\pi$ induces a graph projection. The preimages of each vertex are given by $K$ from the first isomorphism theorem. The graph projection is locally bijective from the construction of $\widetilde{\partial}_A$ and $\widetilde{\partial}_B$. Therefore, the Tanner graph of $\tilde{Q}$ is a $|K|$-lift of the Tanner graph of $Q$. 
\end{proof}

\noindent Therefore, the choices entering a further lift in our framework can be summarized as follows:
\begin{itemize}[leftmargin=13pt]
    \item A group extension
    \[
        1\longrightarrow K\xrightarrow{\chi}H\xrightarrow{\pi}G
        \longrightarrow 1,
    \]
    considered up to equivalence. Its lift index is $|K|$. For fixed $\pi$, changing $\chi$ only relabels the kernel. Note that $\pi$ is not unique for a given $(K, H, G)$ in general.

    \item For each supported element $g$ in every entry of the boundary maps of $A$ and $B$, a lift
    \[
        h_{ij}(g)\in\pi^{-1}(g).
    \]
    Given a section $\sigma\colon G\to H$, this choice is parametrized by
    $h_{ij}(g)=\chi(\gamma_{ij}(g))\sigma(g)$; the choice of $\sigma$ is gauge. Inequivalent choices of the elements $h_{ij}(g)$ can produce distinct LP codes even for a fixed extension.
\end{itemize}

\begin{example}[Lift of hypergraph product codes~\cite{TillichZemor2014}]
    We consider a group $H$ and a trivial group $G\,{=}\,\{e\}$, along with the following exact sequence
    \begin{align}
        1\longrightarrow H \xlongrightarrow{\chi} H \xlongrightarrow{\pi} 1 \longrightarrow 1,
    \end{align}
    where $\chi(h) = h$ and $\pi(h) = e$ for any $h\in H$. This gives a lift of hypergraph product codes~\cite{Panteleev2021, Panteleev2022, Panteleev2021a}.
\end{example}

\begin{example}[Lift of BB codes in Ref.~\cite{symons2025sequences}]
    We consider a group $G = \Z_l\times \Z_m$, $K = \Z_u\times \Z_t$ and $H = \Z_{ul}\times \Z_{tm}$ that fit into the following exact sequence
    \begin{align}
        1\longrightarrow \Z_u\times \Z_t \xlongrightarrow{\chi} \Z_{ul}\times \Z_{tm} \xlongrightarrow{\pi} \Z_l\times \Z_m \longrightarrow 1,
    \end{align}
    This is equivalent to a sequence of BB codes in Ref.~\cite{symons2025sequences}. We note that depending on whether $(u, l)$ and $(t, m)$ are coprime or not, the above extension can be a trivial extension (direct product) or non-trivial extension (classified by the group cohomology $H^2(G, K)$).
\end{example}

\begin{example}[Non-Abelian Group]\label{Ex:NonAbelianLP}
    We consider the group $G = S_3\times S_3$, and the boundary maps
    \begin{align}
        a = 1 + s_1 s_2r_2 + r_2s_2, \quad b = 1 + r_1 + r_2,
    \end{align}
    where we use the presentation $S_3 = \langle r, s |  r^3 = s^2 = rsrs = 1\rangle $, and the subscript $1 (2)$ means the generators for the first (second) group $S_3$ in the direct product $G = S_3\times S_3$.
    The base code has the code parameter $\llbracket 72, 4, 6\rrbracket$. 
    We lift this code using the groups $K = \Z_2, \Z_3, \Z_4$, and the lifted codes are shown in Table~\ref{Table:LiftedLPforCI}. We study these codes in Sec.~\ref{Sec:CI}.
\end{example}

\subsection{Projection via quotient map}\label{Subsec:projectingLPcodes}

We next consider the reverse direction. Given an LP code $Q$ over $R_H$ and a quotient homomorphism $\pi:H\to G$, we construct a smaller LP code $Q_{\pi}$ over $R_G$ by applying $\pi$ to every group-algebra entry of the boundary maps. Again, see Fig.~\ref{Fig:LiftbySES} for an illustration.

Suppose that $Q$ is constructed from classical boundary maps $\partial_A$ and $\partial_B$ over $R_H$ in Eq.~\eqref{Eq:LPClassicalCodes}, and that $H$ fits into the short exact sequence in Eq.~\eqref{Eq:group_extension} for some $K$ and $G$. Then the homomorphism $\pi$ induces the ring (group-algebra) homomorphism
\begin{align}
\Pi:R_H&\longrightarrow R_G,\\
\sum_{h\in H}a_hh&\longmapsto\sum_{h\in H}a_h\pi(h),
\end{align}
where $a_h\in \F_2$. Applying $\Pi$ entrywise to $\partial_A$ gives the projected boundary map $\partial_A^\pi$, whose entries are
\begin{align}
    (\partial_A^\pi)_{ij} \coloneqq \Pi((\partial_A)_{ij}) = \sum_{h\in H}a_{ij, h}\pi(h).
\end{align}
We define $\partial_B^\pi$ from $\partial_B$ in the same way. Taking the balanced product over $R_G$ of the length-one chain complexes defined by $\partial_A^\pi$ and $\partial_B^\pi$ yields an LP code $Q_\pi$. We call $Q_\pi$ the \emph{projected code} of $Q$ under $\pi$.

The ring homomorphism $\Pi$ also induces projection maps on the free modules $A_1$, $A_0$, $B_1$, and $B_0$. For example, since $A_1$ is a free right $R_H$-module, every element $a_1\in A_1$ can be written as
\begin{align}
    a_1 = \sum_{i=1}^{n_1^A}e_i r_i,
\qquad
r_i\in R_H,
\end{align}
where $\{e_i\}_{i=1}^{n_1^A}$ is a basis of $A_1$. Applying $\Pi$ to each coefficient defines the projection
\begin{align}
    &p_{A_1}:A_1\longrightarrow A_1^\pi,\\
    &p_{A_1}(a_1) = \sum_{i=1}^{n_1^A}e_i\Pi(r_i) \in \left(R_G\right)^{n_1^A} \simeq A_1^\pi.
\end{align}
The projection maps on $A_0$, $B_1$, and $B_0$ are defined analogously.

Using the natural isomorphism of $\F_2$-vector spaces
\begin{align}\label{Eq:LineraIsomorphism}
    A_{i}\otimes_{R_H} B_j\simeq \F_2^{|H|n_{i}^An_{j}^B}, \quad A_{i}^\pi\otimes_{R_G} B_j^\pi\simeq \F_2^{|G|n_{i}^An_{j}^B}
\end{align}
for $i = 0, 1$ and $j = 0, 1$, the module projections defined above induce $\F_2$-linear maps between the chain complexes $Q$ and $Q_\pi$. These maps commute with the corresponding boundary maps and therefore define chain maps between the two codes, as stated formally below.
\begin{proposition}[Chain/cochain maps by projection]\label{theorem:projectingLPbyGroupExtensions}
    The induced chain maps yield the following commuting diagram:
    \begin{equation}\label{Eq:LPchain}
        \begin{tikzcd}
        {{Q}_2} & {{Q}_1} & {Q_0} \\
    	{Q_{2}^{\pi}} & {Q_{1}^{\pi}} & {Q_{0}^{\pi}}
    	\arrow["{\partial_2}", from=1-1, to=1-2]
    	\arrow["{{\partial}_1}", from=1-2, to=1-3]
            \arrow["{{\partial}_2^{\pi}}", from=2-1, to=2-2]
    	\arrow["{\partial_1^{\pi}}", from=2-2, to=2-3]
            \arrow["{p_2}", from=1-1, to=2-1]
    	\arrow["p_1", from=1-2, to=2-2]
            \arrow["p_0", from=1-3, to=2-3]
        \end{tikzcd},
    \end{equation}
    where $Q_\bullet\; (Q_\bullet^\pi)$ is a chain complex over $\F_2$ for the code $Q\;(Q_\pi)$, respectively. The chain maps are given as 
    \begin{align}
        p_2 &= p_{A_1}\otimes p_{B_1}\\
        p_1 &= \left(p_{A_1}\otimes p_{B_0}\right)\oplus \left(p_{A_0}\otimes p_{B_1}\right)\\
        p_0 &= p_{A_0}\otimes p_{B_0}.
    \end{align}
    These maps are well defined on the balanced tensor products because,
    for every $r\in R_H$,
    \begin{align}
        p_{A_i}(ar)
        &=
        p_{A_i}(a)\Pi(r),
        &
        p_{B_j}(rb)
        &=
        \Pi(r)p_{B_j}(b).
    \end{align}
    
    The induced cochain maps yield the following commuting diagram.
    \begin{equation}\label{Eq:LPcochain}
        \begin{tikzcd}
        {{Q}^0} & {{Q}^1} & {Q^2} \\
    	{Q^{0}_{\pi}} & {Q^{1}_{\pi}} & {Q^{2}_{\pi}}
    	\arrow["{\partial^1}", from=1-1, to=1-2]
    	\arrow["{{\partial}^2}", from=1-2, to=1-3]
            \arrow["{{\partial}^1_{\pi}}", from=2-1, to=2-2]
    	\arrow["{\partial^2_{\pi}}", from=2-2, to=2-3]
            \arrow["{p^0}", from=1-1, to=2-1]
    	\arrow["p^1", from=1-2, to=2-2]
            \arrow["p^2", from=1-3, to=2-3]
        \end{tikzcd},
    \end{equation}
    where $Q^\bullet\; (Q^\bullet_\pi)$ is a cochain complex for the code $Q\;(Q_\pi)$, respectively. Here, $Q^i$ is a dual space of $Q_i$.
    Note that this is not the canonical pullback of the chain map.
    
    Moreover, when each entry of the boundary maps is not degenerate under $\Pi$, i.e., no two distinct group elements in its support have the same image under $\pi$, the code $Q$ is an index-$|K|$ cover of $Q_{\pi}$. 
\end{proposition}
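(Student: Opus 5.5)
Three things need to be established: the chain maps $p_\bullet$ are well defined and commute with the boundaries; the cochain maps $p^\bullet$ exist and commute with the coboundaries; and the cover statement holds under non-degeneracy.

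\textbf{Chain maps.} I start with the module level. Since $\Pi$ is a ring homomorphism applied coefficientwise, $p_{A_i}(ar)=p_{A_i}(a)\Pi(r)$ is immediate. Because $(\partial_A^\pi)_{ij}=\Pi((\partial_A)_{ij})$, we get $p_{A_0}\circ\partial_A=\partial_A^\pi\circ p_{A_1}$, and the same holds for $B$. Next I check that $p_{A_i}\otimes p_{B_j}$ is well defined on the balanced product. The map $(a,b)\mapsto p_{A_i}(a)\otimes_{R_G}p_{B_j}(b)$ is $\F_2$-bilinear, and by the two twisting identities it sends $(ar,b)$ and $(a,rb)$ to the same element $p(a)\Pi(r)\otimes p(b)$. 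It is therefore $R_H$-balanced and factors through $A_i\otimes_{R_H}B_j$. Commutativity of the diagram then reduces to tensoring the module-level identities componentwise. For example,
\[
p_1\circ(I\otimes\partial_B)=(p_{A_1}\otimes p_{B_0})(I\otimes\partial_B)=p_{A_1}\otimes\partial_B^\pi p_{B_1}=(I\otimes\partial_B^\pi)p_2,
\]
and the other three edges of the square follow identically. Since $\partial_1$ is the sum of the two components, the diagram commutes.

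\textbf{Cochain maps.} The statement stresses that $p^\bullet$ is not the pullback of $p_\bullet$, which would in any case point in the opposite direction. I would define $p^i$ through the identification of Eq.~\eqref{Eq:LineraIsomorphism}: a cochain on $Q_i$ is again a vector in $\F_2^{|H|n_i^An_j^B}$, written as an element of the same free modules in the dual basis. The coboundary $\partial^{k}$ is the transposed matrix, which in group-algebra language means replacing each entry $r=\sum a_h h$ by $r^*=\sum a_h h^{-1}$, transposing the block matrix, and swapping the module sides (the dual of a right $R_H$-module is a left one via $^*$). I would then set $p^i$ to be the coefficientwise $\Pi$ map on these dual modules. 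The key identity is $\Pi(r^*)=\Pi(r)^*$, which holds because $\pi(h^{-1})=\pi(h)^{-1}$. With it, $\Pi$ commutes with the conjugate-transpose, so the coboundaries of $Q_\pi$ are exactly the entrywise images of those of $Q$. The same balanced-tensor argument as in the chain case then gives commutativity.

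\textbf{Cover.} Under non-degeneracy, $Q$ is precisely a lift of $Q_\pi$ in the sense of Proposition~\ref{theorem:liftingLPbyGroupExtensions}. For each supported $g$ in an entry of $\partial_A^\pi$ there is exactly one supported $h$ with $\pi(h)=g$, and writing $h=\chi(\gamma)\sigma(g)$ via the bijection $F$ recovers the data $\gamma_{ij}(g)$. The earlier proposition then shows the Tanner graph of $Q$ is a $|K|$-lift of that of $Q_\pi$.

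\textbf{Main obstacle.} Most of this is bookkeeping. The subtle step is making the cochain side precise for non-abelian $H$, where the handedness flips and the involution $^*$ must interact correctly with $\Pi$; I expect this is where care is needed. Non-degeneracy matters only for the cover claim. Without it, the terms $\pi(h)=\pi(h')$ cancel mod 2 and the induced map on Tanner graphs fails to be locally bijective, but the chain maps remain valid.
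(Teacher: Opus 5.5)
Your proposal is correct and follows essentially the same route as the paper. The chain-map commutativity comes from $\Pi$ being a ring homomorphism; you tensor the module-level identities, while the paper checks basis vectors $e_i^A h\otimes e_j^B$ directly. The cochain case comes from the conjugate-transpose form of the coboundaries together with $\Pi(r^\ast)=\Pi(r)^\ast$. The cover statement follows by recovering the kernel labels through the bijection $F$ and invoking Proposition~\ref{theorem:liftingLPbyGroupExtensions}. Your universal-property argument for well-definedness on the balanced tensor products is a small addition that the paper only asserts.
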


\begin{proof}
    We illustrate the idea of the proof here, and give a detailed discussion in Appendix~\ref{Sec:ProvingProjectionTheorem}. We first show that the chain diagram commutes. It suffices to show 
    \begin{align}
        p_1\circ\partial_2 (e^A_{i}h\otimes e^B_j) = \partial_2^\pi\circ p_2(e^A_{i}h\otimes e^B_j),
    \end{align}
    for every $h\in H$, $i \in [n_{1}^A]$, and $j\in [n_1^B]$. This follows directly from the definition of the projected boundary maps and the fact that $\Pi$ is a ring homomorphism. The same calculation gives $p_0\circ\partial_1=\partial_1^\pi\circ p_1$
    Hence, $p_\bullet$ defines a chain map.
        
    The cochain statement is slightly more subtle. Indeed, the coboundary matrices are obtained by transposing the boundary matrices and applying the group-algebra involution to each entry. It follows, by an analogous calculation on basis cochains, that
    \begin{align}
        p^1\circ\partial^1
        \left(e_B^j\otimes h e_A^i\right)
        =
        \partial_\pi^1\circ p^0
        \left(e_B^j\otimes h e_A^i\right),
    \end{align}
    for every $h\in H$, $i\in[n_1^A]$, and $j\in[n_1^B]$. Here $e^i_A$ $(e^j_B)$ is the $R_H$-dual basis of $e_i^A$ $(e_j^B)$, respectively.  The remaining cochain-map identity follows in the same way.
    
    We prove the last statement. Suppose that the boundary maps are nondegenerate under $\Pi$. Consider an entry
    \begin{align}
        (\partial_A^\pi)_{ij} = \sum_{h\in H}a_{ij,h}\pi(h).
    \end{align}
    Nondegeneracy implies that $\pi(h)$ defines a distinct element in $G$ for $h\in H$ with $a_{ij, h} = 1$. Since the map $F:K\times G\rightarrow H$ in Eq.~\eqref{Eq:BijectionofSES} is bijective, there is a unique $\gamma_{ij}(\pi(h))$ such that 
    \begin{align}
        h = F\left(\gamma_{ij}(\pi(h)),\pi(h)\right).
    \end{align}
    Therefore, $\partial_A$ can be reconstructed from $\partial_A^\pi$ by assigning $\gamma_{ij}(\pi(h))$ to each supported element:
    \begin{align}
        (\partial_A)_{ij}
        =
        \sum_{h\in H}
        a_{ij,h}
        F\left(\gamma_{ij}(\pi(h)),\pi(h)\right).
    \end{align}
    The same argument applies to $\partial_B$. Hence, $Q$ is obtained from $Q_\pi$ through the lifting construction of Proposition~\ref{theorem:liftingLPbyGroupExtensions}. It follows that the Tanner graph of $Q$ is an index-$|K|$ cover of the Tanner graph of $Q_\pi$.
\end{proof}
If the boundary maps are degenerate under $\Pi$, degenerate terms may cancel under projection, thereby changing the check weights. In this case, the Tanner graph of $Q$ is not a graph lift of the standard Tanner graph of $Q_\pi$. Nevertheless, the projection still induces the chain and cochain maps described above.

\begin{remark}
    If the degenerate terms are retained after projection, they define a generalized Tanner graph where multiple edges may connect the same check and qubit vertices. The Tanner graph of $Q$ can then be regarded as a $|K|$-lift of this generalized Tanner multigraph. We note that such multigraphs are not the standard Tanner-graph convention for qLDPC codes~\cite{PRXQuantum.2.040101}, as parallel edges are combined modulo $2$ in the parity check matrices.
\end{remark}

\begin{example}[Gross code as a cover code]
    We study the projected codes of the gross code $\llbracket144, 12, 12\rrbracket$~\cite{Bravyi2024, yoder2025tourgrossmodularquantum}. The gross code is defined over a group $\F_2[\Z_{12}\times \Z_6] \simeq \F_2[x, y]/(x^{12} - 1, y^{6}- 1)$ with the boundary maps
    \begin{align}
        a = x^3 + y + y^2,\quad b = y^3 + x + x^2.
    \end{align}
    We consider the short exact sequence in Eq.~\eqref{Eq:group_extension} to obtain projected LP codes. This is essentially the factorization problem of $12$ and $6$. The results are summarized in Table~\ref{Table:GrossProjectedCodes}. Note that the boundary maps are not uniquely determined, as the homomorphism $\pi$ is not unique. Specifically, given a short exact sequence in Eq.~\eqref{Eq:group_extension} and a homomorphism $\pi$, a homomorphism $\psi\circ\pi$ also defines the same short exact sequence for an automorphism $\psi$ on $G$. We note that the projected code obtained from $\psi\circ\pi$ is isomorphic to the code obtained from $\pi$. 
\end{example}

\begin{table}
\centering

\setlength{\tabcolsep}{6pt}      % default is about 6pt
\renewcommand{\arraystretch}{1.15}

\begin{tabular}{c|c|c|c}
\toprule
Group $G$ & $H_a$ & $H_b$ & $\llbracket n,k,d\rrbracket $ \\
\midrule
$\Z_{12}\times\Z_3$ & $x^3 + y + y^2$ & $1 + x + x^2$ & $\llbracket 72, 8, 6\rrbracket$\\ % Q46
$\Z_{12}\times\Z_3$ & $x^3 + x^{10}y + x^8y^2$ & $x^6 + x + x^2$ & $\llbracket 72, 8, 6\rrbracket$\\ % Q42
$\Z_{6}\times\Z_6$ & $x^3 + y + y^2$ & $y^3 + x + x^2$ & $\llbracket 72, 12, 6\rrbracket$\\ % Q41
$\Z_{12}\times\Z_2$ & $x^3 + x^{10}y + x^8$ & $x^6y + x + x^2$ & $\llbracket 48, 4, 6\rrbracket$\\ % Q44
$\Z_{12}\times\Z_2$ & $x^3 + x^8y + x^4$ & $y + x + x^2$ & $\llbracket 48, 8, 4\rrbracket$\\ % Q45
$\Z_{6}\times\Z_3$ & $1 + x + x^2$ & $x^3 + y + y^2$ & $\llbracket 36, 8, 4\rrbracket$\\ % Q39
$\Z_{6}\times\Z_3$ & $x + x^2 + x^3$ & $x^3 + xy + x^2y^2$ & $\llbracket 36, 8, 4\rrbracket$\\ % Q40
$\Z_{6}\times\Z_3$ & $x^3 + y + y^2$ & $1 + x + x^2$ & $\llbracket 36, 8, 4\rrbracket$\\ % Q23
\bottomrule
\end{tabular}
\caption{
\textbf{Projected codes of the gross code.} We obtain the list of projected codes by identifying distinct group extensions in  Eq.~\eqref{Eq:group_extension} for $\mathbb{Z}_{12} \times\mathbb{Z}_6$. Note that the table is not exhaustive.
}
\label{Table:GrossProjectedCodes}
\end{table}

\subsection{Chain/cochain maps}\label{Subsec:ChainCochainMaps}
Before proceeding to the next section, we review several properties of the chain and cochain maps arising from lift and projection. In particular, we discuss the induced maps on homology and cohomology groups, or equivalently, how logical operators in the base and lifted codes are related. We explain how these properties depend on the lift index $|K|$. We also discuss bounds on the code parameters. Some of the discussion can be found in  Refs.~\cite{hatcher2002algebraic, symons2025sequences, guemard2025moderate}. 

Suppose that an LP code $Q$ is projected to an LP code $Q_{\pi}$ with the induced chain (cochain) maps in Eqs.~\eqref{Eq:LPchain} and~\eqref{Eq:LPcochain}. Taking the $\F_2$-dual of these projection maps gives maps in the opposite direction~\cite{hatcher2002algebraic}, which are described by the cochain complexes
\begin{equation}\label{Eq:LPchain_transfermap}
    \begin{tikzcd}
    {{Q}_2} & {{Q}_1} & {Q_0} \\
    {Q_{2}^{\pi}} & {Q_{1}^{\pi}} & {Q_{0}^{\pi}}
    \arrow["{\partial_2}", from=1-1, to=1-2]
    \arrow["{{\partial}_1}", from=1-2, to=1-3]
        \arrow["{{\partial}_2^{\pi}}", from=2-1, to=2-2]
    \arrow["{\partial_1^{\pi}}", from=2-2, to=2-3]
        \arrow["{\tau_2}", from=2-1, to=1-1]
    \arrow["\tau_1", from=2-2, to=1-2]
        \arrow["\tau_0", from=2-3, to=1-3]
    \end{tikzcd},
\end{equation}
and 
\begin{equation}\label{Eq:LPcochain_transfermap}
    \begin{tikzcd}
    {{Q}^0} & {{Q}^1} & {Q^2} \\
    {Q^{0}_{\pi}} & {Q^{1}_{\pi}} & {Q^{2}_{\pi}}
    \arrow["{\partial^1}", from=1-1, to=1-2]
    \arrow["{{\partial}^2}", from=1-2, to=1-3]
        \arrow["{{\partial}^1_{\pi}}", from=2-1, to=2-2]
    \arrow["{\partial^2_{\pi}}", from=2-2, to=2-3]
        \arrow["{\tau^0}", from=2-1, to=1-1]
    \arrow["\tau^1", from=2-2, to=1-2]
        \arrow["\tau^2", from=2-3, to=1-3]
    \end{tikzcd}.
\end{equation}
These canonically induced maps are called \emph{transfer maps}. We note that $\tau^i\;(\tau_i)$ is canonically induced from $p_i\;(p^i)$, respectively. At the matrix level, the transfer maps are obtained by taking the transpose of the corresponding projection maps. 

Equivalently, the transfer maps admit the following representation, which is useful to characterize logical operators in the lifted codes. Let $\{e_j^\pi\}_j$ and $\{e_k\}_k$ be the canonical group-element bases of $Q_i^\pi$ and $Q_i$, respectively.  With respect to these bases and their duals, the chain and cochain projections have the same binary matrix. Thus the chain transfer map is given by
\begin{align}\label{Eq:transfermap_def}
    \tau_i\left(e_j^\pi\right) = \sum_{\substack{k\\p_i(e_k)=e_j^\pi}} e_k,
\end{align}
and extended $\F_2$-linearly to all of $Q_i^\pi$. The cochain transfer map $\tau^i$ admits an analogous description.

It is well known that chain/cochain maps induce maps on the homology/cohomology groups~\cite{hatcher2002algebraic}, respectively. We denote the maps induced by projection on the $i$-th homology and cohomology groups by $\hat{p}_i$ and $\hat{p}^i$, and those induced by transfer by $\hat{\tau}_i$ and $\hat{\tau}^i$, respectively. In the physics language, applying the transfer map to a logical operator of the projected code $Q_{\pi}$ produces a well-defined logical class of the lifted code $Q$; the resulting class depends only on the logical equivalence class of the original operator and not on the choice of representative. Although the resulting class may be trivial, with a slight abuse of terminology, we refer to homology/cohomology classes obtained in this way as \emph{lifted logical operators}.

\begin{proposition}
    The pair of the lifted logical $X$ and $Z$ operators commutes if the lift index $t$ is even. When $t$ is odd, they anticommute iff the underlying logical operators anticommute in the projected code.
\end{proposition}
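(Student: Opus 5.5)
Our approach is to compute the symplectic (commutation) pairing of lifted operators directly at the chain level, using the explicit form of the transfer maps in Eq.~\eqref{Eq:transfermap_def}. Represent a logical $Z$ operator of $Q_\pi$ by a cycle $z\in Q_1^\pi$ and a logical $X$ operator by a cocycle $x\in Q^1_\pi$. The lifted operators are $\tau_1(z)\in Q_1$ and $\tau^1(x)\in Q^1$. Their commutation is governed by the $\F_2$-pairing $\langle \tau^1(x),\tau_1(z)\rangle$, i.e., the parity of the overlap of their supports on the qubits of $Q$. By the discussion preceding the statement, both are well-defined (co)homology classes. The pairing between $H^1$ and $H_1$ is independent of representatives, since $\langle \delta c, z\rangle=\langle c,\partial z\rangle=0$ and symmetrically. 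It therefore suffices to compute the pairing for any chosen representatives.

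The key identity to establish is
\begin{align}
\langle \tau^1(x),\tau_1(z)\rangle = t\,\langle x, z\rangle \pmod 2,
\end{align}
where $t=|K|$. To prove it, we expand both in the canonical group-element bases. By Eq.~\eqref{Eq:transfermap_def}, $\tau_1(e_j^\pi)=\sum_{k:\,p_1(e_k)=e_j^\pi}e_k$ is the sum over the fiber of $e_j^\pi$. The analogous description holds for $\tau^1(e^j_\pi)$ with respect to $p^1$. The point to check is that the chain projection $p_1$ and the cochain projection $p^1$ have the same binary matrix (stated just before Eq.~\eqref{Eq:transfermap_def}). Consequently the fibers of a basis element $e_j^\pi$ under $p_1$ and of its dual $e^j_\pi$ under $p^1$ coincide as sets of qubits of $Q$. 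Each fiber has exactly $|K|$ elements: a qubit basis element is $e^A_i h\otimes e^B_l$, and its image is determined by $\pi(h)$ once we use the identification of $A_i\otimes_{R_H}B_l$ with $\F_2^{|H|n_i^An_l^B}$ and Eq.~\eqref{Eq:LineraIsomorphism}. The fibers over distinct basis elements are disjoint. Hence $\langle \tau^1(e^{j}_\pi),\tau_1(e^{j'}_\pi)\rangle = |K|\,\delta_{jj'}$, and bilinearity gives the identity.

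With the identity in hand, both claims follow. If $t$ is even, the pairing vanishes, so every lifted $X$ commutes with every lifted $Z$. If $t$ is odd, the pairing equals $\langle x,z\rangle$, so the lifted pair anticommutes iff the base pair does.

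The main obstacle is the bookkeeping in the non-abelian case. There the cochain basis is written as $e_B^j\otimes h e_A^i$, and the coboundary involves the group-algebra involution $h\mapsto h^{-1}$. We must verify that, in the chosen dual bases, $p^1$ really sends the dual of $e^A_i h\otimes e^B_l$ to the dual of its image under $p_1$, so that the two fibers coincide. We would check this on basis elements, using $\pi(h^{-1})=\pi(h)^{-1}$ and the balanced-tensor identifications. This should reduce to the matrix equality already asserted in the text.
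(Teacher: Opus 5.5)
Your proposal is correct and follows the same route as the paper, which only observes that under the transfer maps of Eq.~\eqref{Eq:transfermap_def} the overlap count of the lifted $X$ and $Z$ representatives is multiplied by $t$; your fiber computation makes that step explicit, since each base qubit has a disjoint fiber of exactly $|K|$ lifted qubits, so the pairing picks up a factor $|K|$. The only difference is cosmetic: in the paper's convention $H_1$ carries the logical $X$ operators and $H^1$ the logical $Z$ operators, the reverse of your labeling, which is harmless because the argument is symmetric in $X$ and $Z$.
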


\begin{proof}
    This can be proved by the definition of the lifted logical operators and the transfer maps in Eq.~\eqref{Eq:transfermap_def}. Specifically, the number of intersections is multiplied by the lift index $t$ under the transfer map. When $t$ is even, the lifted logical operators always overlap an even number of times and thus they commute. When $t$ is odd, the number of intersections is unchanged modulo 2, and thus the statement holds.
\end{proof}

For odd-index lifts, we have the following proposition, which has been used to derive the bounds on the code parameters~\cite{symons2025sequences, guemard2025moderate}.

\begin{proposition}
    When $t$ is odd, the induced transfer maps on the homology and cohomology groups are injective in every degree. Moreover, the induced projection maps on these groups are surjective in every degree.
\end{proposition}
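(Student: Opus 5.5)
The plan is to use the composite identity $p_i\circ\tau_i = t\cdot \mathrm{id}$ on chains (and $p^i\circ\tau^i = t\cdot\mathrm{id}$ on cochains), and then pass to homology and cohomology. First, verify the identity at the chain level. By Eq.~\eqref{Eq:transfermap_def}, $\tau_i(e^\pi_j)=\sum_{k:\,p_i(e_k)=e^\pi_j} e_k$. Applying $p_i$ gives $|p_i^{-1}(e_j^\pi)|\, e^\pi_j$.

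So the key point is that every canonical basis element of $Q_i^\pi$ has exactly $t=|K|$ preimages among the canonical basis elements of $Q_i$. I would check this directly. A basis element of $A_a\otimes_{R_H}B_b$ is $e^A_\mu h\otimes e^B_\nu$ with $h\in H$. Under $p_{A}\otimes p_B$ it is sent to $e^A_\mu \pi(h)\otimes e^B_\nu$. The fiber $\pi^{-1}(g)$ has exactly $|K|$ elements, by the bijection $F$ in Eq.~\eqref{Eq:BijectionofSES}. This holds whether or not the boundary maps are degenerate under $\Pi$, because the fiber count concerns the vertices, not the edges.

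Next, establish the cochain-level analogue. The paper states that the chain and cochain projections have the same binary matrix in these bases, and that $\tau^i$ is given by the same formula. Hence $p^i\circ\tau^i=t\cdot\mathrm{id}$ as well. Since $t$ is odd and we work over $\F_2$, we have $t\cdot\mathrm{id}=\mathrm{id}$. Thus $p_i\circ\tau_i=\mathrm{id}_{Q_i^\pi}$ and $p^i\circ\tau^i=\mathrm{id}_{Q^i_\pi}$ in every degree.

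Then pass to (co)homology. Both $p_\bullet$ and $\tau_\bullet$ are chain maps: $p_\bullet$ by Proposition~\ref{theorem:projectingLPbyGroupExtensions}, and $\tau_\bullet$ by the transfer-map diagram~\eqref{Eq:LPchain_transfermap}. By functoriality of homology, $\hat p_i\circ\hat\tau_i=\widehat{p_i\circ\tau_i}=\mathrm{id}_{H_i(Q^\pi)}$. A left-invertible map is injective, so $\hat\tau_i$ is injective. A right-invertible map is surjective, so $\hat p_i$ is surjective. The identical argument with $p^\bullet,\tau^\bullet$ handles cohomology.

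The main obstacle I expect is confirming that $\tau_\bullet$ really is a chain map in the indicated direction, and that the matrix identifications of the chain and cochain projections are consistent. Transposing $p^i$ yields a map that commutes with the boundary maps only if the coboundary of $Q$ is compatible with the transpose of $\partial$, including the involution $h\mapsto h^{-1}$ noted in the proof of Proposition~\ref{theorem:projectingLPbyGroupExtensions}. If this bookkeeping is doubtful, I would instead verify $\partial_i\tau_i=\tau_{i-1}\partial^\pi_i$ directly on a basis element. Here is the fallback computation. Take a generator $e^\pi_j$ corresponding to $(g,\mu)$. Its lift is the sum over the fiber $\{(\chi(\gamma)\sigma(g),\mu)\}$. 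Applying $\partial$ and counting, each preimage of a target $(sg,\nu)$ receives contributions matching the multiplicity of $s$ in $\pi$ applied to the supported elements. This is exactly the coefficient in $\partial^\pi$, including the mod-2 cancellations in the degenerate case.
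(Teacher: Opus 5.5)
Your proposal is correct and follows the same route as the paper. The paper also argues from $\hat{p}_i\circ\hat{\tau}_i = tI = I$ and $\hat{p}^i\circ\hat{\tau}^i = tI = I$ to conclude that $\hat{\tau}$ is injective and $\hat{p}$ is surjective. You add two justifications the paper leaves implicit: the chain-level identity $p_i\circ\tau_i = t\cdot\mathrm{id}$, obtained from the $|K|$-element fibers of $\pi$, and the check that $\tau_\bullet$ commutes with the boundary maps.
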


\begin{proof}
    The projection and transfer maps satisfy
    \begin{align}
        \hat{p}_i\circ\hat{\tau}_i &= tI =  I, & \hat{p}^i\circ\hat{\tau}^i &= tI = I.
    \end{align}
    Here, multiplication by $t$ is the identity over $\F_2$ because $t$ is odd.
    It follows that $\hat{\tau}_i$ and $\hat{\tau}^i$ are injective, while $\hat{p}_i$ and $\hat{p}^i$ are surjective. Indeed, if $\hat{\tau}_i(l)=0$, then applying $\hat{p}_i$ gives
    \begin{align}
        l = \hat{p}_i\circ\hat{\tau}_i(l) = \hat{p}_i(0) =0,
    \end{align}
    which proves injectivity. Conversely, for any $l\in H_i(Q_\pi)$, the class $\hat{\tau}_i(l)\in H_i(Q)$ satisfies
    \begin{align}
        \hat{p}_i\left(\hat{\tau}_i(l)\right) = l,
    \end{align}
    which proves surjectivity. The cohomological statements follow analogously.

    In particular, these conclusions hold for the first homology and cohomology groups, which describe the logical $X$ and $Z$ operators, respectively.
\end{proof}

Finally, we state the bounds on the code parameters for odd index lifts.
\begin{corollary}[Bounds on the lifted code parameters~\cite{symons2025sequences, guemard2025moderate}]
    When the lift index $t$ is odd, the number of logical qubits in the lifted code cannot be smaller than the number of logical qubits in the base code; i.e., $k'\geq k$ holds, where $k'\;(k)$ is the number of logical qubits in the lifted (base) code. Moreover, the distance of the lifted code is bounded as $d'\leq td$, where $d'\;(d)$ is the distance of the lifted (base) code. When $k' = k$, we also have $d'\geq d$.
\end{corollary}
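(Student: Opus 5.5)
The corollary has three claims, and I would prove them in order. Each one follows from the preceding proposition, which gives $\hat p_i\circ\hat\tau_i = I$ and $\hat p^i\circ\hat\tau^i=I$ for odd $t$, together with the explicit forms of $p_i$ and $\tau_i$.

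\textbf{Step 1: $k'\ge k$.} Here $k=\dim H_1(Q_\pi)$ and $k'=\dim H_1(Q)$. The preceding proposition shows that $\hat\tau_1:H_1(Q_\pi)\to H_1(Q)$ is injective when $t$ is odd. Hence $k'\ge k$ by dimension counting. Equivalently, $\hat p_1$ is surjective, which gives the same bound.

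\textbf{Step 2: $d'\le td$.} Take a minimum-weight nontrivial representative $\ell\in\ker\partial_1^\pi$ of the projected code, of weight $d$; this is the $X$ type, and the $Z$ type is identical using cochains. By Eq.~\eqref{Eq:transfermap_def}, $\tau_1$ sends each basis qubit $e_j^\pi$ to the sum of its $t$ preimages under $p_1$, which are disjoint fibers. So $|\tau_1(\ell)|\le t|\ell| = td$. Since $\tau_\bullet$ is a chain map, $\tau_1(\ell)$ is a cycle. It is nontrivial in homology because $\hat\tau_1$ is injective and $[\ell]\neq 0$. Therefore $d_X'\le td_X$. Doing the same for $d_Z$ with $\tau^1$ and taking minima gives $d'\le td$. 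The one subtlety is to apply this separately to whichever of $X$ and $Z$ attains the base distance $d$; in either case the lifted distance in that sector is at most $td$, and $d'$ is the minimum over sectors.

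\textbf{Step 3: $d'\ge d$ when $k'=k$.} This is the step I expect to need the most care. If $k'=k$, the injective map $\hat\tau_1$ between spaces of equal dimension is an isomorphism. Then $\hat p_1\circ\hat\tau_1=I$ forces $\hat p_1=\hat\tau_1^{-1}$, so $\hat p_1$ is also injective. Now take any nontrivial cycle $c$ in $Q$ of weight $d'$. Its image $p_1(c)$ is a cycle of $Q_\pi$ whose class $\hat p_1([c])$ is nonzero. Since $p_1$ maps each basis element to a single basis element, $|p_1(c)|\le |c|$, with possible cancellations only decreasing the weight. Hence $d\le |p_1(c)|\le d'$ in the $X$ sector. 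The $Z$ sector is handled the same way using $p^1$ and the cohomological statement. The points to check are that the cochain projection $p^1$ is also weight non-increasing, which holds because it has the same binary matrix as $p_1$, and that $\dim H^1=\dim H_1$, so $k'=k$ also makes $\hat\tau^1$ an isomorphism.
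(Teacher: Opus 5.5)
Your proof is correct and follows the same route as the paper's sketch. You use injectivity of $\hat\tau_1$ for $k'\ge k$, transfer a minimum-weight base logical (which in fact has weight exactly $td$) for $d'\le td$, and for $d'\ge d$ use that $\hat p_1=\hat\tau_1^{-1}$ is injective when $k'=k$ together with the weight-non-increasing projection; your sector-wise checks for the $Z$ case (that $p^1$ has the same binary matrix as $p_1$, and that $\dim H^1=\dim H_1$) fill in details the paper leaves implicit.
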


\begin{proof}
    A detailed discussion is given in Ref.~\cite{symons2025sequences}, so we sketch the ideas of the proof here. The first bound on the number of logical qubits follows from the injectivity of the transfer map. The second bound $d'\leq td$ follows from the fact that a lifted logical operator has weight $td$. If $k'=k$, the transfer and projection maps on the logical spaces are isomorphisms. Therefore, a minimum-weight logical operator of the lifted code projects to a nontrivial logical operator of the base code without increasing its weight, which implies $d'\geq d$.
\end{proof}

\section{Search for cover codes}\label{Sec:CodeSearch}
In this section, we provide finite-size instances of LP codes obtained by our framework. For each group extension, we construct the corresponding lifted code, evaluate its parameters, and retain the code that performs best according to a chosen search criterion. The search criteria can be given as input to the algorithm, and we use $kd^2/n$ here. The number of logical qubits can be readily calculated by Gaussian elimination. To search for low-weight logical operators, we use \texttt{QDistRnd}~\cite{Pryadko_2022} with $2,000$ sampling iterations. The smallest logical-operator weight found provides an upper bound on the code distance and is used as its estimate during the search. The code search is performed using \texttt{GAP}~\cite{GAP4}. Further details, including pseudocode, are provided in Appendix~\ref{Appendix:DetailsofNumerics_codesearch}. After the codes are found, we verify the distance exactly using mixed-integer programming~\cite{Bravyi2024, PhysRevResearch.6.033086} in some cases. The mixed-integer programming is performed using \texttt{Python}. In reporting the code parameters, we write $\leq d$ when only an upper bound has been obtained and write $d$ when the distance has been verified exactly.

\vspace{5pt} \noindent {\bf Lift of the BB code.} As a benchmark, we lift BB codes studied in Ref.~\cite{symons2025sequences}. We first consider the lift of a base BB code defined for the group $\Z_6\times \Z_6$, giving the following chain complex over the ring $R = \F_2[\Z_6\times \Z_6]\simeq \F_2[x, y]/(x^6 - 1, y^6 - 1)$
\begin{align}
    R \xlongrightarrow{\begin{pmatrix}
        a\\
        b
    \end{pmatrix}} R^2 \xlongrightarrow{\begin{pmatrix}
        b & a
    \end{pmatrix}} R.
\end{align}
We choose the boundary maps to be $a = x^3 + y + y^2,\; b = y^3 + x + x^2$, whose code parameter is $\llbracket72, 12, 6\rrbracket$.

Table~\ref{table:Lifted_BBcodes} shows the lifted LP codes (see the rows colored red). We extend the base group $G=\Z_6\times\Z_6$ by the kernel $K=\Z_3$ through the short exact sequence in Eq.~\eqref{Eq:group_extension}, obtaining several possible extension groups $H$. The first two columns list the kernel $K$ and the extended group $H$ respectively. The last column shows the code parameters of the resulting codes. The code with the best $kd^2/n$ for each extension is presented. We compare these codes with the lifted BB codes reported in Ref.~\cite{symons2025sequences}, with improvements indicated in bold. For example, while Ref.~\cite{symons2025sequences} reports a code with parameters $\llbracket216,12,12\rrbracket$, our search finds codes with the same blocklength and number of logical qubits but an improved distance of $14$.

We also consider lifts of a BB code defined by the group $\Z_8\times \Z_4$, with the boundary maps $a = xy^3 + 1 + x^6 + x^3y^2$, $b = x^6y + x^4y + x^3 + x^5y$. The code parameter of the base code is $\llbracket64, 14, 8\rrbracket$. The corresponding lifted codes are shown in the blue rows of Table~\ref{table:Lifted_BBcodes}. We find a code with the parameter $\llbracket128,16,12\rrbracket$. This improves upon the best instance $\llbracket128,14,12\rrbracket$ reported in Ref.~\cite{symons2025sequences}.

\begin{table}[t]
\centering
\setlength{\tabcolsep}{0pt}      % default is about 6pt
\renewcommand{\arraystretch}{1.15}

\begin{tabular}{|c|c|c|}
\hline
\;\;Group $K$\;\; & Group $H$ & $\llbracket n, k, d\rrbracket$ \\
\hline

\rowcolor{red!15}
$\Z_3$
&
\makecell[c]{
$\Z_{18}\times\Z_6$ (in Ref.~\cite{symons2025sequences})\\
$\,\,\,\,\Z_2\times\Z_2\times ((\Z_3\times\Z_3)\rtimes\Z_3)\,\,\,\,$\\
$\Z_6\times\Z_6\times\Z_3$
}
&
\makecell[c]{
$\,\,\,\llbracket 216,12,\textbf{12}\rrbracket\,\,\,$\\
$\llbracket 216,12,\textbf{14}\rrbracket$\\
$\,\,\llbracket 216,12,\textbf{14}\rrbracket\,\,$
}
\\
\hline

\rowcolor{blue!10}
$\Z_2$
&
\makecell[c]{\quad\quad
$\Z_8\times\Z_8$ (in Ref.~\cite{symons2025sequences})\quad\;\;\;\\
$\Z_{16}\rtimes\Z_4$
}
&
\makecell[c]{
$\,\,\,\llbracket 128,\textbf{14},12\rrbracket\,\,\,$\\
$\,\,\,\llbracket 128,\textbf{16},12\rrbracket\,\,\,$
}
\\
\hline

\end{tabular}
\caption{\textbf{Lifted BB codes.}
We compare the lifted BB codes in Ref.~\cite{symons2025sequences}, and the ones constructed using our method.
For the first block, the base code is defined for the group $\mathbb{Z}_6\times \mathbb{Z}_6$ and the parity-check polynomials 
$a = x^3 + y + y^2$ and $b = y^3 + x + x^2$. 
The code parameter of the base code is $\llbracket 72, 12, 6\rrbracket$. 
The lifted codes are shown in the first row block of the table, colored in red. 
For the second block, the base code is defined for the group $\mathbb{Z}_8\times \mathbb{Z}_4$, with the polynomials 
$a = xy^3 + 1 + x^6 + x^3y^2$ and 
$b = x^6y + x^4y + x^3 + x^5y$. 
The base code parameter is $\llbracket64, 14, 8\rrbracket$. 
The second row block shows the lifted codes, colored in blue.
}
\label{table:Lifted_BBcodes}
\end{table}

\vspace{5pt} \noindent {\bf Lift of the CC code.} We further apply our lifting framework to the clustered cyclic (CC) codes introduced in Ref.~\cite{gu2026qgpuparallellogicquantum}. CC codes are a subfamily of LP codes defined over the group algebra $R = \F_2[\Z_p]$ with $p$ being prime. They are constructed from the boundary maps in Eq.~\eqref{Eq:LPClassicalCodes}, subject to the following conditions~ $(i)$ every entry of $\partial_A$ and $\partial_B$ is either zero or a binomial; $(ii)$ the zero entries are arranged cyclically; and $(iii)$ both matrices are square and have full rank over $R$. These properties are important for properties of the CC codes and their lifts in Sec.~\ref{Subsec:PPS}.

For instance, we consider a base CC code defined by the group $G = \Z_3$, with the boundary maps
\begin{align}
    \partial_A = \begin{pmatrix}
        1 + x^2 & x + x^2\\
        1 + x & x + x^2
    \end{pmatrix}, \quad 
    \partial_B = \begin{pmatrix}
    1 + x^2 & x + x^2\\
    1 + x & 1 + x
    \end{pmatrix}.
\end{align}
This code has the parameter $\llbracket24, 8, 3\rrbracket$~\cite{gu2026qgpuparallellogicquantum}. We search for lifts with kernels $\Z_3$ and $\Z_5$, using the code distance as the search criterion. The resulting lifted codes have parameters $\llbracket72,8,8\rrbracket$ and $\llbracket120,8,13\rrbracket$, respectively.

We also lift several other CC codes. We restrict to lifts with kernels $K$ of odd order, and search for codes using distance as the optimization criterion. The results are summarized in Table~\ref{Table:liftedCCcodes}. The left two columns show the group and the parameter of the base code respectively, and the right two columns show the group and the best code parameter of the lifted codes. The boundary maps of the lifted codes are presented in Appendix~\ref{App:boundarymaps}.

As shown in Table~\ref{Table:liftedCCcodes}, we identify several finite-size lifted CC codes with improved distances. Some of these codes also improve upon the parameters reported in Ref.~\cite{gu2026qgpuparallellogicquantum}; they achieve the same $k$ and $d$ using fewer physical qubits. For instance, the lifted CC code $\llbracket120, 8, 14\rrbracket$ improves upon the CC code $\llbracket 136, 8, \leq 14\rrbracket$. Moreover, as we discuss in Sec.~\ref{Sec:LogicalGadget}, these lifted codes admit the same parallel product surgery~\cite{gu2026qgpuparallellogicquantum} as the base CC codes. They therefore preserve the logical-operation capabilities with an improved code distance.

\begin{table}
\centering

\setlength{\tabcolsep}{6pt}      % default is about 6pt
\renewcommand{\arraystretch}{1.15}

\begin{tabular}{c|c|c|c}
\toprule
Group $G$ & Base $\llbracket n,k,d\rrbracket $ & Group $H$ & Lifted $\llbracket n,k,d\rrbracket $\; \\
\midrule
$\Z_3$ & $\llbracket 24, 8, 3\rrbracket$ &$\Z_9$ & $\llbracket 72, 8, 8\rrbracket$\\
$\Z_3$ & $\llbracket 24, 8, 3\rrbracket$ &$\Z_{15}$ & $\llbracket120, 8, 13\rrbracket$\\
$\Z_5$ & $\llbracket 40, 8, 5\rrbracket$ &$\Z_{15}$ & $\llbracket120, 8, 14\rrbracket$\\
$\Z_7$ & $\llbracket 56, 8, 7\rrbracket$ &$\Z_{21}$ & $\llbracket168, 8, 15\rrbracket$\\
$\Z_3$ & $\llbracket 54, 18, 3\rrbracket$ &$\Z_{9}$ & $\llbracket 162, 18, 8\rrbracket$\\
$\Z_{11}$ & $\llbracket 88, 8, 10\rrbracket$ & - & -\\
$\Z_{13}$ & $\llbracket 104, 8, 11\rrbracket$ & - & -\\
$\Z_{17}$ & $\llbracket 136, 8, \leq 14\rrbracket$ & - & -\\
\bottomrule
\end{tabular}
\caption{
\textbf{Lifted CC codes.} We extend the cyclic group $G = \Z_l$ to the group $H$ listed in the table. The left two columns show the group and the code parameters of the base code, and the right two columns show the group and the code parameters of the lifted code, respectively. The boundary maps are given in Appendix~\ref{App:boundarymaps}. The last three rows list previously reported CC codes for comparison. 
}
\label{Table:liftedCCcodes}
\end{table}

\section{Logical operation gadgets}\label{Sec:LogicalGadget}

The lifting framework provides a natural setting for transferring logical-operation gadgets from a base LP code to its lifts.
In Sec.~\ref{Subsec:LiftedSurgery}, we compose the surgery maps of the projected code with the corresponding transfer maps to construct homological measurements~\cite{PhysRevX.15.021088} on the lifted LP code. We demonstrate that the ancilla-code space overhead for surgery can be reduced through this construction.
In Sec.~\ref{Subsec:PPS}, we turn to the lifted CC codes obtained in Sec.~\ref{Sec:CodeSearch} and establish conditions under which they retain the parallel product surgery~\cite{gu2026qgpuparallellogicquantum}.

\subsection{Code surgery}\label{Subsec:LiftedSurgery}

\noindent {\bf Review.} We begin by reviewing the mapping-cone formulation of code surgery~\cite{PhysRevX.15.021088}. Let $Q$ be the data code described by the chain complex ($Q_2$ being $X$-checks)
\begin{equation}
    \begin{tikzcd}
    {{Q}_2} & {{Q}_1} & {Q_0} 
    \arrow["{\partial_2}", from=1-1, to=1-2]
    \arrow["{{\partial}_1}", from=1-2, to=1-3]
    \end{tikzcd}.
\end{equation}
To measure logical operators of $Q$, we introduce an ancilla code $A$ described by ($A_1$ being $X$-checks)
\begin{equation}
    \begin{tikzcd}
    {A_1} & {A_0} & {A_{-1}}
    \arrow["{\partial_1^A}", from=1-1, to=1-2]
    \arrow["{\partial_0^A}", from=1-2, to=1-3]
    \end{tikzcd},
\end{equation}
where each qubit in $A_0$ is prepared in the physical $\ket{0}$ state. Then, we merge the ancilla code $A$ with the data code $Q$ using the maps $\Gamma_{0,1}$ such that the following diagram commutes, i.e., $\partial_1 \circ \Gamma_1 = \Gamma_0 \circ \partial_1^A$
\begin{equation}\label{eq:surgery_diagram_base}
    \begin{tikzcd}
    {A_1} & {A_0} & {A_{-1}} \\
    {Q_2} & {Q_1} & {Q_0} 
    \arrow["{\partial_1^A}", from=1-1, to=1-2]
    \arrow["{\partial_0^A}", from=1-2, to=1-3]
        \arrow["{{\partial}_2}", from=2-1, to=2-2]
    \arrow["{{\partial}_1}", from=2-2, to=2-3]
        \arrow["{\Gamma_1}", from=1-1, to=2-2]
    \arrow["\Gamma_0", from=1-2, to=2-3]
    \end{tikzcd}.
\end{equation}
The mapping cone of this chain map defines a CSS code, which we refer to as the \emph{merged code} and denote by $\operatorname{cone} (\Gamma)$. It has been shown that the phenomenological distance of the code surgery is given by the dressed distance of data logical operators in the merged code~\cite{Williamson2026, zheng2025highratesurgeryconstantoverheadlogical}. Here the data logical operators mean the logical operators in $\operatorname{cone}(\Gamma)$ originating from $Q$, and we treat those originating from $A$ as gauge logical operators. See Appendix~\ref{App:surgery} for the details.

The logical measurement is implemented by measuring the newly introduced $X$-type checks indexed by $A_1$ in the merged code. The ancillary qubits in $A_0$ are initialized in the $Z$-basis $|0 \rangle$ state, so any individual $X$-check acting nontrivially on $A_0$ has a random outcome. Therefore, a product of merged $X$-checks defines a well-defined observable on $Q$ only when its ancillary support cancels.
Such products are labeled by $a \in \ker \partial_1^A$ and measure the product of $X$ specified by $\Gamma_1 a$. The commutativity condition implies $ \partial_1\Gamma_1a
= \Gamma_0\partial_1^Aa = 0$, so $\Gamma_1a$ represents either an $X$ stabilizer or a logical $X$ operator of $Q$:
\begin{align} \label{eq:surgery_measurement}
    \Gamma_1(\ker\partial_1^A)\subseteq \ker \partial_1 = L_X\oplus S_X.
\end{align}
Accordingly, the surgery data $(\Gamma_1, \partial_1^A)$ should be chosen so that $\Gamma_1(\ker\partial_1^A)$ contains the desired logical operators, while the merged code defined by Eq.~\eqref{eq:surgery_diagram_base} retains the fault-tolerance properties required throughout the measurement. 
In many code-surgery constructions, the required ancilla size is controlled by the weight of the logical operators to be measured~\cite{doi:10.1126/sciadv.abn1717, cross2025improvedqldpcsurgerylogical, Williamson2026, zheng2025highratesurgeryconstantoverheadlogical}.

\vspace{5pt} \noindent {\bf Proposal.} Here we propose a method to reduce the ancilla overhead in the surgery of LP codes, which we call \emph{lifted code surgery}. The idea is to first construct an ancilla code $A$ and chain maps $\Gamma_1$ and $\Gamma_0$ for the projected LP code $Q_\pi$, and then use them to perform surgery on an index-$t$ lift $Q$ of $Q_\pi$. If a logical measurement on $Q_\pi$ is implemented using $A$ and the chain maps $\Gamma_1$ and $\Gamma_0$, then the chain maps from $A$ to $Q$ are obtained by composing them with the transfer maps. A key advantage is that the ancilla code $A$ can require fewer qubits than an ancilla constructed directly for $Q$, as the projected code $Q_{\pi}$ often has lower-weight representatives of logical operators than those of $Q$. We demonstrate this reduction below using the gross code and other BB codes.

Suppose that a code-surgery construction for the projected code $Q_{\pi}$ is specified by an ancilla code $A$ and maps
\begin{align}
    \Gamma_1 &: A_1 \longrightarrow Q_1^{\pi}, \\
    \Gamma_0 &: A_0 \longrightarrow Q_0^{\pi}.
\end{align}
The corresponding \emph{lifted code surgery} on $Q$ uses the same ancilla code $A$, with chain maps
\begin{align}
    \widetilde{\Gamma}_1
    &= \tau_1\circ\Gamma_1, \\
    \widetilde{\Gamma}_0
    &= \tau_0\circ\Gamma_0,
\end{align}
where 
\begin{align}
    \tau_i:Q_i^\pi\longrightarrow Q_i
\end{align}
denote the transfer maps.
Equivalently, the construction is described by the following commutative diagram:
\begin{equation}\label{eq:surgery_diagram_cover}
    \begin{tikzcd}
        {A_1} & {A_0} & {A_{-1}} \\
        {Q^\pi_2} & {Q^\pi_1} & {Q^\pi_0} \\
        {Q_2} & {Q_1} & {Q_0}
        \arrow["{\partial_1^A}", from=1-1, to=1-2]
        \arrow["{\partial_0^A}", from=1-2, to=1-3]
        \arrow["{\partial_2^\pi}", from=2-1, to=2-2]
        \arrow["{\partial_1^\pi}", from=2-2, to=2-3]
        \arrow["{\partial_2}", from=3-1, to=3-2]
        \arrow["{\partial_1}", from=3-2, to=3-3]
        \arrow["{\Gamma_1}", from=1-1, to=2-2]
        \arrow["{\Gamma_0}", from=1-2, to=2-3]
        \arrow["{\tau_2}", from=2-1, to=3-1]
        \arrow["{\tau_1}", from=2-2, to=3-2]
        \arrow["{\tau_0}", from=2-3, to=3-3]
    \end{tikzcd}.
\end{equation}
Thus, lifted code surgery uses the ancilla $A$ designed for $Q_{\pi}$ and couples it to the lifted code $Q$ through the composite maps $\tau_1\circ\Gamma_1$ and $\tau_0\circ\Gamma_0$.

We refer to the construction in Eq.~\eqref{eq:surgery_diagram_cover} as \emph{lifted code surgery}. The measured logical operators are represented by
\begin{align}\label{Eq:MeasuredLogicalLiftedSurgery}
\tau_1\circ\Gamma_1\left(\ker\partial_1^A\right)
\subseteq \ker\partial_1.
\end{align}
At the level of homology, these operators are obtained by applying the induced transfer map $\hat{\tau}_1$ to the logical operators measured by the corresponding surgery on $Q_{\pi}$. When $t$ is odd, $\hat{\tau}_1$ is injective, so every nontrivial logical operator measured on $Q_{\pi}$ is mapped to a nontrivial logical operator of $Q$. However, $\hat{\tau}_1$ need not be surjective, so lifted code surgery only addresses logical operators lying in the image of $\hat{\tau}_1$ and may not provide access to all logical operators of $Q$. When $t$ is even, $\hat{\tau}_1$ need not be injective, so some of the logical operators measured on $Q_{\pi}$ may be mapped to trivial homology classes in $Q$. In practice, one must therefore explicitly analyze $\hat{\tau}_1$ and choose the ancilla code $A$ and the map $\Gamma_1$ so that the desired nontrivial logical operators of $Q$ are measured.

\vspace{5pt} \noindent {\bf Caveats.} We emphasize two points. First, the coupling between the ancilla code $A$ and the lifted code $Q$ is generally one-to-$t$, rather than one-to-one, because the chain maps are given by the composites $\tau_1\circ\Gamma_1$ and $\tau_0\circ\Gamma_0$. Consequently, relative to their weights in the ancilla code, the weights of the $X$-type checks associated with $A_1$ increase by $t$, rather than by one as in the one-to-one coupling case. Also, the degrees of the ancilla qubits $A_0$ with respect to the $Z$-type checks increase by $t$.

Second, preservation of the merged-code distance for $Q_{\pi}$ does not imply preservation of the merged-code distance for $Q$. Accordingly, we use lifted code surgery only as ancilla initialization, and then we require that $(i)$ the target logical operators are represented by the space $\tau_1\circ\Gamma_1(\ker\partial_1^A)$ and $(ii)$ the maps $\widetilde{\Gamma}_1$ and $\widetilde{\Gamma}_0$ satisfy the chain-map compatibility conditions required for a valid surgery construction. Starting from this initialization, we randomly add qubits and checks to the ancilla code until the desired merged-code distance for $Q$ is preserved. Although there is no theoretical guarantee that the resulting boosted ancilla code has lower space overhead than one constructed directly for $Q$, we numerically demonstrate the reduction in the space overhead through several examples. Further details of the ancilla code construction are provided in Appendix~\ref{App:surgery}.

\vspace{5pt} \noindent {\bf Demonstration.}
First, we study lifted code surgery of a low-rate type for the gross code $\llbracket 144,12,12\rrbracket$, which is based on auxiliary graph surgery~\cite{Williamson2026} (simply graph surgery). The projected codes of the gross code are listed in Table~\ref{Table:GrossProjectedCodes}, and we choose the projected code $\llbracket 72,12,6\rrbracket$ among them. For the logical basis defined in Appendix~\ref{App:GrossLogical}, the transferred logical subspace $\hat{\tau}_1\left(H_1(Q_{\pi})\right)$ is spanned by $\overline{X}_1,\overline{X}_2,\ldots,\overline{X}_6$. 

We consider measuring each of these six logical operators, and compare the ancilla-code sizes required by lifted graph surgery with those required by the standard graph-surgery construction of Ref.~\cite{Williamson2026}. We refer to the latter as \emph{direct graph surgery}. In each case, we first initialize the ancilla graph $A$ and the chain maps $\Gamma_0$ and $\Gamma_1$ so that the target logical operator is measured. We then add edges to the ancilla graph to augment the merged-code distance until no logical operator of weight below $12$ is found in $10{,}000$ \texttt{QDistRnd} samples.

The results are summarized in Table~\ref{tab:q41-wy-overhead}. The first column specifies the measured logical operator, and the second identifies the construction stage; i.e., immediately after initialization and after the estimated distance is preserved. The third and fourth columns show the ancilla-code sizes for lifted and direct graph surgery, respectively. Each tuple lists the numbers of ancilla qubits, $X$-checks, and $Z$-checks, in this order. Table~\ref{tab:q41-wy-x1-distributions} shows the distributions of check weights and qubit degrees of the merged code for the measurement of $\overline{X}_1$. Table~\ref{tab:q41-wy-x1-distributions}(a) corresponds to the lifted surgery, and Table~\ref{tab:q41-wy-x1-distributions}(b) shows those of the direct surgery. The lifted merged codes for the other logical operators exhibit similar distributions. Overall, lifted graph surgery reduces the ancilla overhead by approximately a factor of two, while introducing only a modest increase in check weights and qubit degrees. Note that the check-weight of the data code is $6$, and the qubit-degree is $4$. 

\begin{table}[t] 
\centering 
\setlength{\tabcolsep}{6pt} % default is about 6pt 
\renewcommand{\arraystretch}{1.15} 
    \begin{tabular}{c|c|c|c}
        \hline
        Target & Stage & Lifted overhead & Direct overhead \\
        \hline
        $\overline{X}_1$
        & \makecell[c]{Init. \\ Final}
        & \makecell[c]{
            $(9, 6, 4) = 19$ \\
            $(12, 6, 7) = 25$
          }
        & \makecell[c]{
            $(18, 12, 7) = 37$ \\
            $(25, 12, 14) = 51$
          } \\
        \hline
        $\overline{X}_2$
        & \makecell[c]{Init. \\ Final}
        & \makecell[c]{
            $(9, 6, 4) = 19$ \\
            $(12, 6, 7) = 25$
          }
        & \makecell[c]{
            $(18, 12, 7) = 37$ \\
            $(24, 12, 13) = 49$
          } \\
        \hline
        $\overline{X}_3$
        & \makecell[c]{Init. \\ Final}
        & \makecell[c]{
            $(9, 6, 4) = 19$ \\
            $(12, 6, 7) = 25$
          }
        & \makecell[c]{
            $(18, 12, 7) = 37$ \\
            $(24, 12, 13) = 49$
          } \\
        \hline
        $\overline{X}_4$
        & \makecell[c]{Init. \\ Final}
        & \makecell[c]{
            $(9, 6, 4) = 19$ \\
            $(12, 6, 7) = 25$
          }
        & \makecell[c]{
            $(18, 12, 7) = 37$ \\
            $(22, 12, 11) = 45$
          } \\
        \hline
        $\overline{X}_5$
        & \makecell[c]{Init. \\ Final}
        & \makecell[c]{
            $(9, 6, 4) = 19$ \\
            $(12, 6, 7) = 25$
          }
        & \makecell[c]{
            $(18, 12, 7) = 37$ \\
            $(25, 12, 14) = 51$
          } \\
        \hline
        $\overline{X}_6$
        & \makecell[c]{Init. \\ Final}
        & \makecell[c]{
            $(9, 6, 4) = 19$ \\
            $(12, 6, 7) = 25$
          }
        & \makecell[c]{
            $(18, 12, 7) = 37$ \\
            $(23, 12, 12) = 47$
          } \\
        \hline
    \end{tabular}
    \caption{\textbf{Ancilla overhead for the surgery protocols of the gross code.} For each target logical operator, the upper and lower entries correspond to the initialization and final stages, respectively. ``Init.'' denotes the merged code immediately after initialization, while ``Final'' denotes the merged code after the estimated distance is restored. Each entry is reported as $(n_A,r_X,r_Z)=n_A+r_X+r_Z$, where $n_A$ is the number of ancilla qubits and $r_X$ and $r_Z$ are the numbers of ancilla $X$ and $Z$ checks, respectively.}
\label{tab:q41-wy-overhead}
\end{table}

\begin{table}[t]
\centering
\setlength{\tabcolsep}{4pt}
\renewcommand{\arraystretch}{1.15}
\resizebox{0.98\columnwidth}{!}{%
\begin{tabular}{@{}cc|cc|cc|cc|cc|cc@{}}
    \multicolumn{6}{c|}{\textbf{(a) Lifted surgery}}
    &
    \multicolumn{6}{c}{\textbf{(b) Direct surgery}}
    \\[3pt] 
    \cline{1-6}\cline{7-12}

    \multicolumn{2}{c|}{$X$ stab.}
    & \multicolumn{2}{c|}{$Z$ stab.}
    & \multicolumn{2}{c|}{Qubits}
    &
    \multicolumn{2}{c|}{$X$ stab.}
    & \multicolumn{2}{c|}{$Z$ stab.}
    & \multicolumn{2}{c}{Qubits}
    \\

    {\bf wt.} & \#
    & {\bf wt.} & \#
    & {\bf deg.} & \#
    &
    {\bf wt.} & \#
    & {\bf wt.} & \#
    & {\bf deg.} & \#
    \\
    \hline

    4 & 0  & 3 & 6  & 3 & 2
    & 4 & 2  & 3 & 8  & 3 & 2   \\

    5 & 1  & 4 & 1  & 4 & 0
    & 5 & 7  & 4 & 5  & 4 & 10  \\

    6 & 76 & 5 & 0  & 5 & 5
    & 6 & 74 & 5 & 1  & 5 & 7   \\

    7 & 1  & 6 & 54 & 6 & 134
    & 7 & 1  & 6 & 54 & 6 & 138 \\

      &    & 7 & 18 & 7 & 15
    &   &    & 7 & 18 & 7 & 12  \\
    \hline

    Tot. & 78 & Tot. & 79 & Tot. & 156
    & Tot. & 84 & Tot. & 86 & Tot. & 169 \\
    \hline
\end{tabular}%
}
\caption{\textbf{Detailed analytics of the surgery for $\overline{X}_1$.} We compare the final merged codes obtained using {\bf (a)} lifted graph surgery and {\bf (b)} direct graph surgery. }
\label{tab:q41-wy-x1-distributions}
\end{table}

% \begin{table*}[!t]
% \centering
% \setlength{\tabcolsep}{5pt}
% \renewcommand{\arraystretch}{1.15}

% \begin{minipage}[t]{0.48\textwidth}
% \centering
% \textbf{(a) Lifted surgery}\\[3pt]
% \resizebox{\linewidth}{!}{%
% \begin{tabular}{cc|cc|cc}
%     \hline
%     \multicolumn{2}{c|}{$X$ checks}
%     & \multicolumn{2}{c|}{$Z$ checks}
%     & \multicolumn{2}{c}{Qubits} \\
%     Weight & Count & Weight & Count & Degree & Count \\
%     \hline
%      4 & 0 & 3 & 6  & 3 & 2   \\
%     5 & 1  & 4 & 1  &  4 & 0  \\
%     6 & 76  & 5 & 0 & 5 & 5 \\
%     7 & 1 & 6 & 54 & 6 & 134  \\
%     & & 7 & 18 & 7 & 15 \\
%     \hline
%     Total  & 78 & Total & 79 & Total & 156\\
%     \hline
% \end{tabular}%
% }
% \end{minipage}
% \hfill
% \begin{minipage}[t]{0.48\textwidth}
% \centering
% \textbf{(b) Direct surgery}\\[3pt]
% \resizebox{\linewidth}{!}{%
% \begin{tabular}{cc|cc|cc}
%     \hline
%     \multicolumn{2}{c|}{$X$ checks}
%     & \multicolumn{2}{c|}{$Z$ checks}
%     & \multicolumn{2}{c}{Qubits} \\
%     Weight & Count & Weight & Count & Degree & Count \\
%     \hline
%     4 & 2  & 3 & 8  & 3 & 2   \\
%     5 & 7  & 4 & 5  & 4 & 10  \\
%     6 & 74 & 5 & 1  & 5 & 7   \\
%     7 & 1  & 6 & 54 & 6 & 138 \\
%       &    & 7 & 18 & 7 & 12  \\
%     \hline
%     Total  & 84 & Total & 86 & Total & 169\\
%     \hline
% \end{tabular}%
% }
% \end{minipage}

% \caption{\textbf{Check-weight and qubit-degree distributions for
% $\overline{X}_1$.} The final merged codes obtained using (a) lifted graph surgery and (b) direct graph surgery are compared. }
% \label{tab:q41-wy-x1-distributions}
% \end{table*}

Second, we study lifted code surgery of a high-rate type. Here the ancilla codes are constructed using the randomized procedure of Ref.~\cite{zheng2025highratesurgeryconstantoverheadlogical}, with further details provided in Appendix~\ref{App:HighRateSurgery}. As the data code, we choose the lifted BB code $\llbracket216,12,14\rrbracket$, listed in the last row of the red block in Table~\ref{table:Lifted_BBcodes}. We take the base BB code $\llbracket72,12,6\rrbracket$ as the projected code $Q_{\pi}$. For this pair of codes, the induced transfer map $\hat{\tau}_1$ on homology is bijective. Consequently, every logical operator of $Q$ lies in the image of $\hat{\tau}_1$ and can be addressed by the lifted surgery construction.

We compare the ancilla code overhead required to simultaneously measure multiple logical $X$ operators using direct and lifted high-rate surgery. For each $m=1,2,\ldots,12$, we randomly sample five sets of $m$ logical $X$ operators and apply both constructions to the same sampled sets. Throughout the construction, both the maximum check weight and the maximum qubit degree are constrained to be at most $9$. For each merged code, we perform $10{,}000$ randomized \texttt{QDistRnd}-style samples and add ancilla qubits and checks until no dressed data logical operator of weight below $14$ is found. We then perform an additional $30{,}000$ samples and accept the code if no such logical operator is found.

The normalized space cost, defined as the total number of qubits and $X$- and $Z$-type checks in the merged code divided by that of the data code, is shown in Fig.~\ref{Fig:SurgeryOverhead}, with blue points representing direct surgery and red points representing lifted surgery. The dashed lines describe the space overhead after initialization, and the solid lines show the overhead after the merged code passed the randomized distance-screening criterion. The results show that lifted surgery can substantially reduce the space overhead compared to the direct surgery construction. Also, the lifted surgery requires fewer qubits and checks to restore the estimated distance.

\begin{figure}[t]
    \centering
    \includegraphics[width=0.97\linewidth]{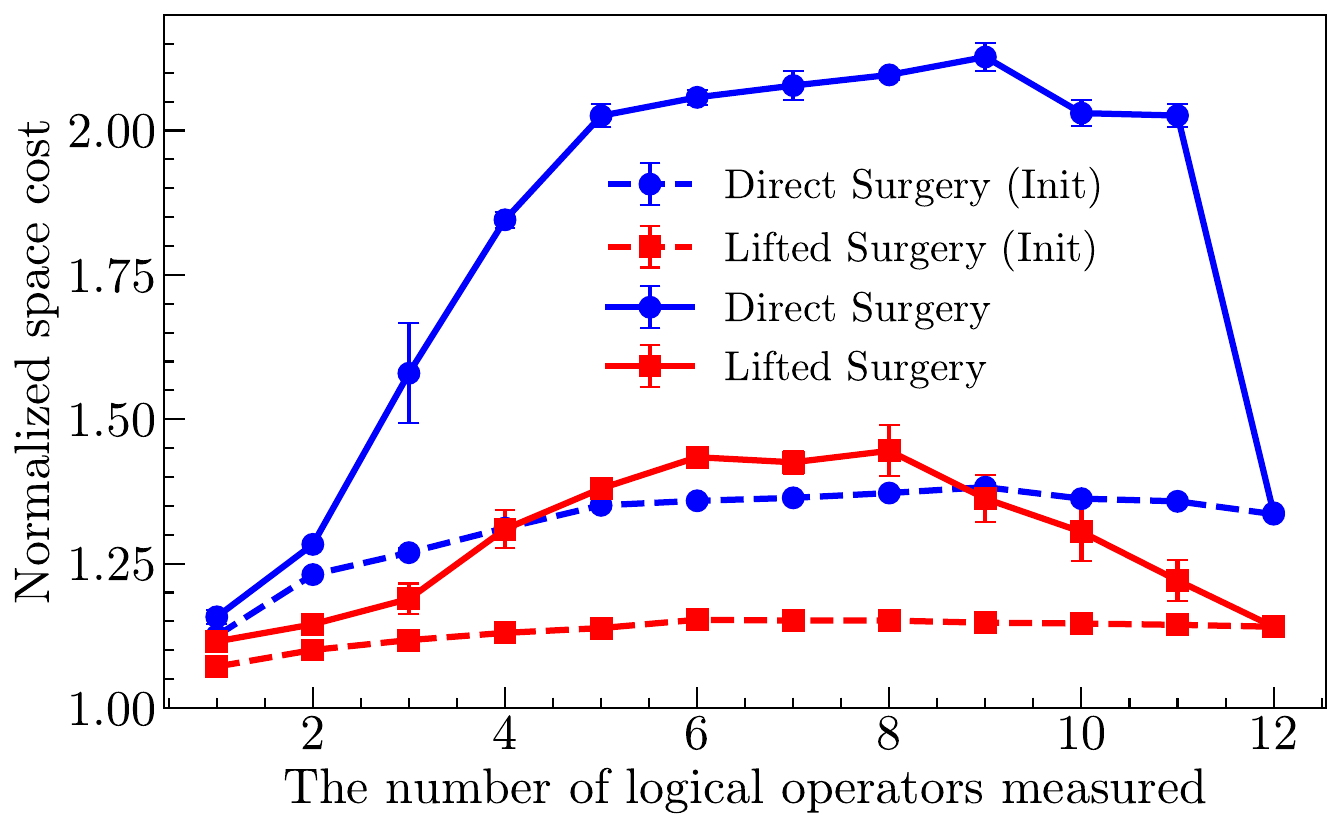}
    \caption{\textbf{Normalized space cost of direct code surgery and the lifted code surgery.} The horizontal axis shows the number of logical operators measured, and the vertical axis shows the normalized space cost, which is the total number of qubits and checks in the merged code divided by that of the data code. We randomly select five sets of $m$ logical operators for each $m = 1, \ldots, 12$, whose normalized space cost fluctuates across the sets as shown using the error bars. The dashed lines show the space overhead after initialization, whereas solid lines indicate the overhead after the estimated distance has been restored.}
    \label{Fig:SurgeryOverhead}
\end{figure}

\noindent {\bf Remark.} Although we phrased the construction as lifted surgery, an analogous construction can be applied to general qLDPC codes without explicitly invoking chain maps. In the gauging picture~\cite{Williamson2026}, lifted graph surgery groups the data qubits into blocks of size $t$ and introduces gauge qubits between these blocks. More generally, one may partition the physical qubits supporting a logical operator into arbitrary subsets, treat these subsets as vertices of an ancilla graph, and place gauge qubits on its edges. The adjacent $Z$ checks are then dressed to commute with the newly introduced local symmetry generators. Gauging these symmetries yields an ancilla construction that measures the target logical operator. Hence, the space-saving mechanism underlying lifted graph surgery could be more universal than our treatment, although there remains some subtlety of distance preservation.

On the other hand, the situation is less straightforward for high-rate surgery~\cite{zheng2025highratesurgeryconstantoverheadlogical}. In that setting, constructing a suitable ancilla code together with nontransversal chain maps seems difficult without the algebraic structure supplied by the lift and its transfer maps. Developing systematic nontransversal surgery constructions for general qLDPC codes, without assuming an underlying lifted structure, is therefore an interesting direction for future work.

\subsection{Parallel product surgery}\label{Subsec:PPS}
We study parallel product surgery~\cite{gu2026qgpuparallellogicquantum} (PPS) for the \emph{lifted} CC codes introduced in Sec.~\ref{Sec:CodeSearch}. CC codes were originally proposed as a subfamily of LP codes with a clustered logical basis. The structure of CC codes allows the logical operators addressed by PPS to be characterized explicitly. 
In general, lifted CC codes need not satisfy the defining properties of the CC codes, so the logical addressability does not automatically extend. We identify sufficient conditions under which a lifted CC code retains the same logical addressability under PPS as its base CC code, and show that all lifted CC codes listed in Table~\ref{Table:liftedCCcodes} satisfy these conditions.

\begin{definition}[Parallel product surgery~\cite{gu2026qgpuparallellogicquantum}]
    Suppose that an LP code is defined by the pair of classical codes $A_\bullet$ and $B_\bullet$ in Eq.~\eqref{Eq:LPClassicalCodes}. Let $H_a: A_1\to A_0$ and $H_b: B_1\to B_0$ be $R$-linear maps. We take a copy of the data LP code as the ancilla code and use the following chain maps
    \begin{align}
        \Gamma_1 &= I\otimes H_b\oplus H_a\otimes I,\\
        \Gamma_0 &= \big[ H_a\otimes I, I\otimes H_b \big],
    \end{align}
    for code surgery in Eq.~\eqref{eq:surgery_diagram_base}.      This is called \emph{parallel product surgery}.

    % Equivalently, this is described by the following diagram:
    % \begin{eqnarray}
    %     \begin{tikzcd}
    %        {} & {A_1\otimes B_0} & {} \\
    %        {A_1\otimes B_1} && {A_0\otimes B_0} \\
    %        {} & {A_0\otimes B_1} & \\
    %        {} & {A_1\otimes B_0} & {} \\
    %        {A_1\otimes B_1} && {A_0\otimes B_0} \\
    %        {} & {A_0\otimes B_1} &
    %        \arrow["{I\otimes\partial_B}", from=2-1, to=1-2]
    %        \arrow["{\partial_A\otimes I}", from=1-2, to=2-3]
    %        \arrow["", from=2-1, to=3-2]
    %        \arrow["", from=3-2, to=2-3]
    %        \arrow["", from=5-1, to=4-2]
    %        \arrow["", from=4-2, to=5-3]
    %        \arrow["{\partial_A\otimes I}"', from=5-1, to=6-2]
    %        \arrow["{I\otimes\partial_B}"', from=6-2, to=5-3]
    %        \arrow[
    %            "{\textcolor{red}{I\otimes H_b}}",
    %            red,
    %            from=2-1,
    %            to=4-2,
    %            sloped]
    %        \arrow[
    %            "{\textcolor{red}{I\otimes H_b}}"',
    %            red,
    %            from=3-2,
    %            to=5-3,
    %            sloped]
    %        \arrow[
    %            "{\textcolor{red}{H_a\otimes I}}",
    %            red,
    %            from=2-1,
    %            to=6-2,
    %            sloped]
    %        \arrow[
    %            "{\textcolor{red}{H_a\otimes I}}",
    %            red,
    %            from=1-2,
    %            to=5-3,
    %            sloped]
    %     \end{tikzcd},
    % \end{eqnarray}
    % where the red arrows correspond to the chain maps $\Gamma_0$ and $\Gamma_1$.
\end{definition}

PPS is defined for general LP codes, but for arbitrary choices of $H_a$ and $H_b$, the measured logical operators (cf. Eq.~\eqref{eq:surgery_measurement}) are not generally available in an explicit form. For CC codes, however, the measured logical subspace can be characterized completely for any choice of $H_a$ and $H_b$; the defining conditions of CC codes imply~\cite{gu2026qgpuparallellogicquantum}
\begin{align}\label{Eq:KernelForPPS}
    \ker \partial_2 = \ker \partial_A\otimes_R \ker \partial_B = \mathsf{row} (\omega_G I_a\otimes_R I_b),
\end{align}
where $\mathsf{row}(\cdot)$ denotes the space spanned by the row vectors of a matrix, and $\omega_G = \sum_{g\in G}g$. The logical $X$ operators measured by PPS are represented by
\begin{align}\label{Eq:PPSforCCcodes}
    \Gamma_1(\ker \partial_2) = \mathsf{row}\left[\omega_G (  I\otimes_R H_b  \oplus H_{a}\otimes_R I )\right]
\end{align}
Moreover, because $\omega_G g = g\omega_G=\omega_G$ for every $g\in G$, multiplication by $\omega_G$ maps each entry of $H_a$ and $H_b$ to either $0$ or $\omega_G$, depending only on the parity of the entries. Therefore, without loss of generality, each entry of $H_a$ and $H_b$ may be taken to be either $0$ or $1$ when characterizing the measured logical subspace.

Also, it has been shown that the logical operators of the CC codes admit the canonical representation~\cite{gu2026qgpuparallellogicquantum}
\begin{align}\label{Eq:LogicalDistribution}
    L_{X/Z} &= \operatorname{diag}_{2n_an_b}(\omega_G) = \begin{pmatrix}
        \omega_G & 0 & \cdots & 0\\
        0 & \omega_G & \cdots & 0\\
        \vdots & \vdots & \ddots& \vdots\\
        0 & 0 & \cdots & \omega_G
    \end{pmatrix}.
\end{align}
Thus, the code encodes $2n_an_b$ logical qubits, and each row of $L_{X/Z}$ represents a logical operator supported on a distinct cluster. Comparing Eqs.~\eqref{Eq:PPSforCCcodes} and \eqref{Eq:LogicalDistribution}, we see that the nonzero entries of $H_a$ and $H_b$ determine which canonical logical $X$ operators are included in the measured logical subspace~\cite{gu2026qgpuparallellogicquantum}.

\vspace{5pt} \noindent {\bf{PPS for \emph{lifted} CC codes}.}
We now consider PPS for lifted CC codes. In general, lifted CC codes do not retain the algebraic structure that defines the CC family. Consequently, the relations that make PPS addressable for CC codes, such as Eqs.~\eqref{Eq:KernelForPPS} and~\eqref{Eq:LogicalDistribution}, do not automatically extend to the lifted code. Nevertheless, we show that, under suitable conditions, these relations can be transferred to the lifted code through the chain maps. We further show that all lifted CC codes listed in Table~\ref{Table:liftedCCcodes} satisfy these conditions and therefore retain the same logical addressability under PPS as their base CC codes.

Let $\widetilde{Q}$ be an odd index $t$ lift of a CC code $Q$, and assume that the number of logical qubits remains the same after the lift. Under these assumptions, the induced transfer map on the first homology is an isomorphism as discussed in Sec.~\ref{Subsec:ChainCochainMaps}. Therefore, a complete set of logical representatives for the lifted code is obtained by applying the transfer map to the canonical logical representatives of the base code.

Thus, the clustered support structure in Eq.~\eqref{Eq:LogicalDistribution} is preserved under this transfer. Specifically, each cluster of size $p$ in the base code is lifted to a distinct cluster of size $tp$ (c.f. Eq.~\eqref{Eq:transfermap_def}). Hence, the logical operators of the lifted CC code admit the canonical form
\begin{align}\label{Eq:LogicalDistribution_lift}
\widetilde{L}_{X/Z}
=
\operatorname{diag}_{2n_an_b}(\omega_H),
\end{align}
where $\omega_H = \sum_{h\in H}h$, and $H$ is the group of the lifted CC code.

A nontrivial issue is whether the lifted CC code $\widetilde{Q}$ acquires an extra nontrivial basis~\footnote{In fact, this corresponds to the $X$ metachecks in the single-shot QEC literature.} in $\ker(\widetilde{\partial}_2)$. The transfer map ${\tau}_2$ gives the inclusion:
\begin{align}\label{Eq:InclusionH2}
    {\tau}_2( \ker \partial_2 )\subseteq \ker \widetilde{\partial}_2.
\end{align}
In general, the inclusion in Eq.~\eqref{Eq:InclusionH2} may be strict: a lift can introduce additional $X$-metachecks that are not inherited from the base code. For example, a trivial disconnected lift produces additional independent copies of the metachecks of the base code. We show, however, that no such additional metachecks arise when the lift index is odd and the number of logical qubits is unchanged. This result is summarized in the following proposition.

\begin{proposition} Consider lifting a CC code $Q$ using a group $K$ of odd order. Suppose that the number of logical qubits is unchanged by the lift
    \begin{align} \label{eq:assumption1}
        \dim H_1(\widetilde{Q}) = \dim H_1(Q).
    \end{align}
    Then 
    \begin{align} \label{eq:desired_result1}
        \ker \widetilde{\partial}_2 = \tau_2(\ker \partial_2).
    \end{align}
\end{proposition}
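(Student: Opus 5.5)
The plan is a dimension count. It combines the Euler characteristic of the two length-two chain complexes with the injectivity of transfer maps for odd-index lifts. First, \eqref{Eq:InclusionH2} already gives $\tau_2(\ker\partial_2)\subseteq\ker\widetilde{\partial}_2$. So it suffices to show that $\tau_2$ is injective on $\ker\partial_2$ and that
\begin{align}
\dim\ker\widetilde{\partial}_2 \le \dim\ker\partial_2 .
\end{align}
Injectivity is immediate from the earlier proposition on odd-index lifts. Since $Q_3=0$, we have $H_2(Q)=\ker\partial_2$, and the induced transfer map $\hat\tau_2$ is injective when $t=|K|$ is odd. Equivalently, at the chain level, $p_2\circ\tau_2=tI=I$ over $\F_2$. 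Hence $\dim\tau_2(\ker\partial_2)=\dim\ker\partial_2$.

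For the upper bound I use the Euler characteristic. For any length-two complex over $\F_2$,
\begin{align}
\dim H_2-\dim H_1+\dim H_0=\dim Q_2-\dim Q_1+\dim Q_0 .
\end{align}
For a CC code, $\partial_A$ and $\partial_B$ are square, say of ranks $n_a$ and $n_b$ over $R_G$. The right-hand side is then $|G|(n_an_b-2n_an_b+n_an_b)=0$. The lift preserves module ranks, so the lifted matrices $\widetilde{\partial}_A,\widetilde{\partial}_B$ are still square, and the Euler characteristic of $\widetilde{Q}$ also vanishes. This holds even though $\widetilde{Q}$ need not be a CC code. Therefore
\begin{align}
\dim H_2(\widetilde{Q})+\dim H_0(\widetilde{Q})=\dim H_1(\widetilde{Q})=\dim H_1(Q)=\dim H_2(Q)+\dim H_0(Q),
\end{align}
where the middle equality is the assumption \eqref{eq:assumption1}.

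Next I apply the odd-index injectivity result once more, now in degrees $2$ and $0$. It gives $\dim H_2(\widetilde{Q})\ge\dim H_2(Q)$ and $\dim H_0(\widetilde{Q})\ge\dim H_0(Q)$. Combined with the equality of sums above, both inequalities must be equalities. In particular, $\dim\ker\widetilde{\partial}_2=\dim H_2(\widetilde{Q})=\dim H_2(Q)=\dim\ker\partial_2$. An injective image of the same dimension sitting inside $\ker\widetilde{\partial}_2$ must be all of it, which yields \eqref{eq:desired_result1}.

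The step I expect to need the most care is confirming that the Euler characteristic of $\widetilde{Q}$ vanishes. This rests only on the squareness of the lifted boundary matrices, which is inherited from condition $(iii)$ through $\rank_{R_H}\widetilde{A}_i=\rank_{R_G}A_i$. It does not require the full-rank or binomial properties to survive the lift, so the argument should go through without any CC-specific structure of $\widetilde{Q}$. A second point to check is that the degree-$0$ transfer map is injective on $H_0$. This follows from the same identity $\hat p_0\circ\hat\tau_0=tI$, so it is covered by the earlier proposition's statement ``in every degree.''
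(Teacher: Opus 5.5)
Your proposal is correct and follows essentially the same route as the paper: the Euler characteristic of both complexes vanishes because the boundary matrices are square and the lift preserves this, you use the assumption on $H_1$, and odd-index injectivity of $\hat{\tau}_0$ and $\hat{\tau}_2$ then forces $\dim H_2(\widetilde{Q})=\dim H_2(Q)$, after which injectivity of $\tau_2$ on $H_2(Q)=\ker\partial_2$ and the inclusion $\tau_2(\ker\partial_2)\subseteq\ker\widetilde{\partial}_2$ give equality. Your extra remarks spell out two points the paper leaves implicit: that $H_2=\ker\partial_2$ because there is no degree-$3$ term, and that only squareness, not the full set of CC conditions, is needed for $\widetilde{Q}$.
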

\begin{proof}
    Since the CC code is obtained from the product of square matrices, the Euler characteristic $\chi_E$ of its product complex should vanish:
    \begin{align}
        \chi_E (Q_\bullet)=\dim Q_2-\dim Q_1+\dim Q_0=0.
    \end{align}
    The lifted complex has the same structure, so we have $\chi_E(\tilde{Q}_\bullet) = 0$. By the Euler--Poincar\'e identity,
    \begin{align}
        \dim H_2(Q)-\dim H_1(Q)+\dim H_0(Q)=0,
    \end{align}
    and similarly for $\widetilde{Q}_\bullet$. Thus
    \begin{align}  \label{eq:Euler_intermediate}
        \dim H_1(Q)&=\dim H_0(Q)+\dim H_2(Q), \nonumber \\
        \dim H_1(\widetilde{Q})&=\dim H_0(\widetilde{Q})+\dim H_2(\widetilde{Q}).
    \end{align}
    By assumption Eq.~\eqref{eq:assumption1}, the number of logical qubits is unchanged after the lift. Using Eq.~\eqref{eq:Euler_intermediate},
    \begin{align}\label{Eq:Homology_lifted_base_CC}
        \dim H_0(\widetilde{Q})+\dim H_2(\widetilde{Q})
        =
        \dim H_0(Q)+\dim H_2(Q).
    \end{align}
    Because the lift index is odd, the induced transfer map $\hat{\tau}_i$ is injective for $i=0,1,2$. In particular,
    \begin{align}\label{Eq:HomologyIneq}
        \dim H_i(\widetilde{Q})\geq \dim H_i(Q) \quad \textrm{for }i=0,1,2.
    \end{align}
    Comparing Eq.~\eqref{Eq:HomologyIneq} with Eq.~\eqref{Eq:Homology_lifted_base_CC}, we see that the inequalities for $i = 0, 2$ must be equalities. In particular,
    \begin{align}
        \dim H_2(\widetilde{Q})=\dim H_2(Q).
    \end{align}
    For odd lift index $t$, the transfer map is injective on homology, and together with Eq.~\eqref{Eq:InclusionH2}, we obtain the desired result Eq.~\eqref{eq:desired_result1}.
\end{proof}
For the lifted CC instances reported in Sec.~\ref{Sec:CodeSearch}, the number of logical qubits stays the same and thus we have
\begin{align}\label{Eq:LiftedXmetachecks}
    \ker(\widetilde{\partial}_2) = \tau_2 \ker(\partial_2) = \mathsf{row} (\omega_H I_a\otimes I_b).
\end{align}

As a corollary, we obtain the following.

\begin{corollary}[PPS for lifted CC codes]
Under the assumptions of the preceding proposition, the PPS of the base CC code induces a PPS of the lifted CC code.
\end{corollary}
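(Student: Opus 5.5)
The plan is to show that PPS on the lifted code $\widetilde{Q}$, built from lifted maps $\widetilde{H}_a,\widetilde{H}_b$, measures exactly the transfer of the logical subspace measured by the base PPS. Equivalently, the set of canonical clusters addressed is the same. First I fix the data. Given $H_a,H_b$ for the base PPS, by the remark after Eq.~\eqref{Eq:PPSforCCcodes} we may take their entries in $\{0,1\}\subset R_G$. I then define $\widetilde{H}_a,\widetilde{H}_b$ over $R_H$ as the same $0/1$ matrices. These are $R_H$-linear, so the PPS maps
\begin{align}
\widetilde{\Gamma}_1 = I\otimes \widetilde{H}_b\oplus \widetilde{H}_a\otimes I,\quad \widetilde{\Gamma}_0=\big[\widetilde{H}_a\otimes I, I\otimes\widetilde{H}_b\big]
\end{align}
define a valid chain map from a copy of $\widetilde{Q}$ into $\widetilde{Q}$. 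The commutation $\widetilde{\partial}_1\widetilde{\Gamma}_1=\widetilde{\Gamma}_0\widetilde{\partial}_2$ follows from the general PPS construction: the two tensor factors act independently and the bilinear structure of the balanced product makes the squares commute. I will also check that $\Pi$ sends $\widetilde{H}_a,\widetilde{H}_b$ to $H_a,H_b$, so this lifted PPS is the one induced by the base PPS.

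Next I compute the measured subspace. By Eq.~\eqref{Eq:LiftedXmetachecks}, which relies on the preceding proposition, $\ker\widetilde{\partial}_2=\mathsf{row}(\omega_H I_a\otimes I_b)$. Applying $\widetilde{\Gamma}_1$ then gives
\begin{align}
\widetilde{\Gamma}_1(\ker\widetilde{\partial}_2)=\mathsf{row}\left[\omega_H(I\otimes\widetilde{H}_b\oplus\widetilde{H}_a\otimes I)\right].
\end{align}
This is the verbatim analogue of Eq.~\eqref{Eq:PPSforCCcodes} with $\omega_G$ replaced by $\omega_H$. Because $\omega_H h=\omega_H$, every entry reduces to $0$ or $\omega_H$ according to its parity, and that parity is the same as for the base entries.

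Finally I compare with the logical basis. By the preceding discussion, the lifted code has canonical logicals $\widetilde{L}_X=\operatorname{diag}(\omega_H)$ as in Eq.~\eqref{Eq:LogicalDistribution_lift}, obtained as $\tau_1$ of the base canonical logicals. Note that $\tau_1(\omega_G e_\mu)=\omega_H e_\mu$ by Eq.~\eqref{Eq:transfermap_def}. Hence $\tau_1$ sends the base measured subspace $\Gamma_1(\ker\partial_2)$ exactly onto $\widetilde{\Gamma}_1(\ker\widetilde{\partial}_2)$, and the same nonzero pattern of $H_a,H_b$ selects the same cluster indices. Since $\hat{\tau}_1$ is an isomorphism under the stated assumptions, the addressed logical subspaces correspond bijectively.

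I expect the main obstacle to be the identification of $\ker\widetilde{\partial}_2$ with $\tau_2(\ker\partial_2)$, and that is already supplied by the preceding proposition. The residual care is showing that the chain-map condition holds for the lifted $\widetilde{\Gamma}$. This needs $\widetilde{H}_a,\widetilde{H}_b$ to be genuinely module maps, which holds because they are scalar matrices and lie in the center of $R_H$ even when $H$ is non-abelian.
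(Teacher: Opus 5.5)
Your proposal is correct and takes the same approach as the paper. The paper's one-line proof combines Eq.~\eqref{Eq:LiftedXmetachecks}, which gives $\widetilde{\Gamma}_1(\ker\widetilde{\partial}_2)=\mathsf{row}[\omega_H(I\otimes H_b\oplus H_a\otimes I)]$ for binary $H_a,H_b$, with the canonical lifted basis $\operatorname{diag}(\omega_H)$ of Eq.~\eqref{Eq:LogicalDistribution_lift}, exactly as you do. Your extra checks spell out what the paper leaves implicit: scalar $0/1$ matrices are central, so $\widetilde{\Gamma}$ is a valid chain map over $R_H$, and $\tau_1$ carries the base measured subspace onto the lifted one.
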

\begin{proof}
    This follows from Eqs.~\eqref{Eq:LogicalDistribution_lift} and \eqref{Eq:LiftedXmetachecks}.
\end{proof}

\noindent {\bf Merged code distance.} There is no theoretical guarantee that the merged code distance is preserved by PPS for general LP codes~\cite{gu2026qgpuparallellogicquantum}. We therefore verify this property numerically for the lifted codes listed in Table~\ref{Table:liftedCCcodes}. For $k = 8$, no dressed data logical operator of weight below the data code distance was found in the randomized search for every binary choice of $H_a$ and $H_b$.

\section{Lift branches and finite-size crossings}\label{Sec:CI}

In this section, we study the critical behavior of the lifted family of LP codes under decoherence. It has been pointed out that topological memories exhibit a decoherence-induced phase transition (DIPT), which is captured through information-theoretic quantities such as coherent information~\cite{PRXQuantum.5.020343, hlfh-86yz, PhysRevResearch.6.L042014, fx56-8nvy, PhysRevA.111.032402, vijay2025informationcriticalphasesdecoherence}. Here, we study DIPT in algebraically defined qLDPC codes.

A central issue is that phase-transition behavior is sharply defined only in the thermodynamic limit, which is accessed through a sequence of increasing system sizes, which we refer to as a \emph{thermodynamic family}. For codes embedded in Euclidean space with a natural translation symmetry, such as toric codes, a thermodynamic family is obtained simply by enlarging the lattice. For algebraically defined qLDPC codes, however, the appropriate notion of a thermodynamic family is less clear because these codes generally lack an underlying geometrically local lattice.

\begin{figure}
    \centering
    \includegraphics[width=1.0\linewidth]{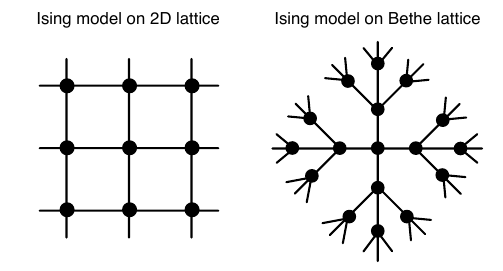}
    \caption{\textbf{Square lattice and a degree-$4$ Bethe lattice.} In both models, the interactions are pairwise and each spin participates in four interactions. The two models have the same radius-one local structure but different larger finite-radius neighborhood structures. The critical temperature is $1/\beta_c^{\rm 2D} = 2.27$ while $1/\beta_c^{\rm Bethe} = 2.89$ with unit coupling strength. }
    \label{Fig:Isingon2DandBethe}
\end{figure}

How should one construct a thermodynamic family from a single finite LP code? A natural possibility is to use graph lifts, which increase the code size while preserving the local incidence structure of the Tanner graph. Local incidence data alone, however, do not determine the thermodynamic limit. A familiar example is provided by the ferromagnetic Ising model on the two-dimensional square lattice and on the degree-$4$ Bethe lattice, illustrated in Fig.~\ref{Fig:Isingon2DandBethe}. The two models have the same radius-one Tanner-graph structure: each interaction couples two spins, and each spin participates in four interactions. Their larger neighborhoods and global geometries are nevertheless different, leading to distinct transition temperatures and critical behavior~\cite{PhysRev.65.117,Baxter1985}. Defining a thermodynamic family therefore requires more than preserving the immediate local connectivity.

A lift contains algebraic and combinatorial information beyond the radius-one neighborhood of the base Tanner graph. Its larger-radius structure is determined jointly by the group extension, the quotient map $\pi$, and the assignment of lifted group elements to the nonzero boundary-map terms. It is therefore natural to ask whether particular sequences of lifts define meaningful thermodynamic families and, conversely, which choices of lifts lead to distinct asymptotic behavior.

To begin addressing these questions, we perform an exploratory study of decoherence-induced phase transitions in several collections of finite lifts. We examine whether their normalized coherent-information curves develop stable finite-size crossings and how the observed behavior depends on the base code and the choice of group extension. For a code encoding $k$ logical qubits, let the encoded state be maximally entangled with a $k$-qubit reference system. The coherent information after the action of a noise channel quantifies how much of this encoded entanglement remains accessible~\cite{PhysRevA.54.2629,Horodecki2006}. It cannot increase under subsequent quantum channels by the data-processing inequality. Moreover, the coherent information retains its noiseless value $k$ if and only if there exists a recovery channel that restores the encoded state and its entanglement with the reference~\cite{nielsen2010quantum,PhysRevA.54.2629}. Preservation of coherent information therefore provides a necessary and sufficient condition for exact quantum error correction on the code space.

\begin{figure}[t!]
    \centering
    \includegraphics[width=1.0\linewidth]{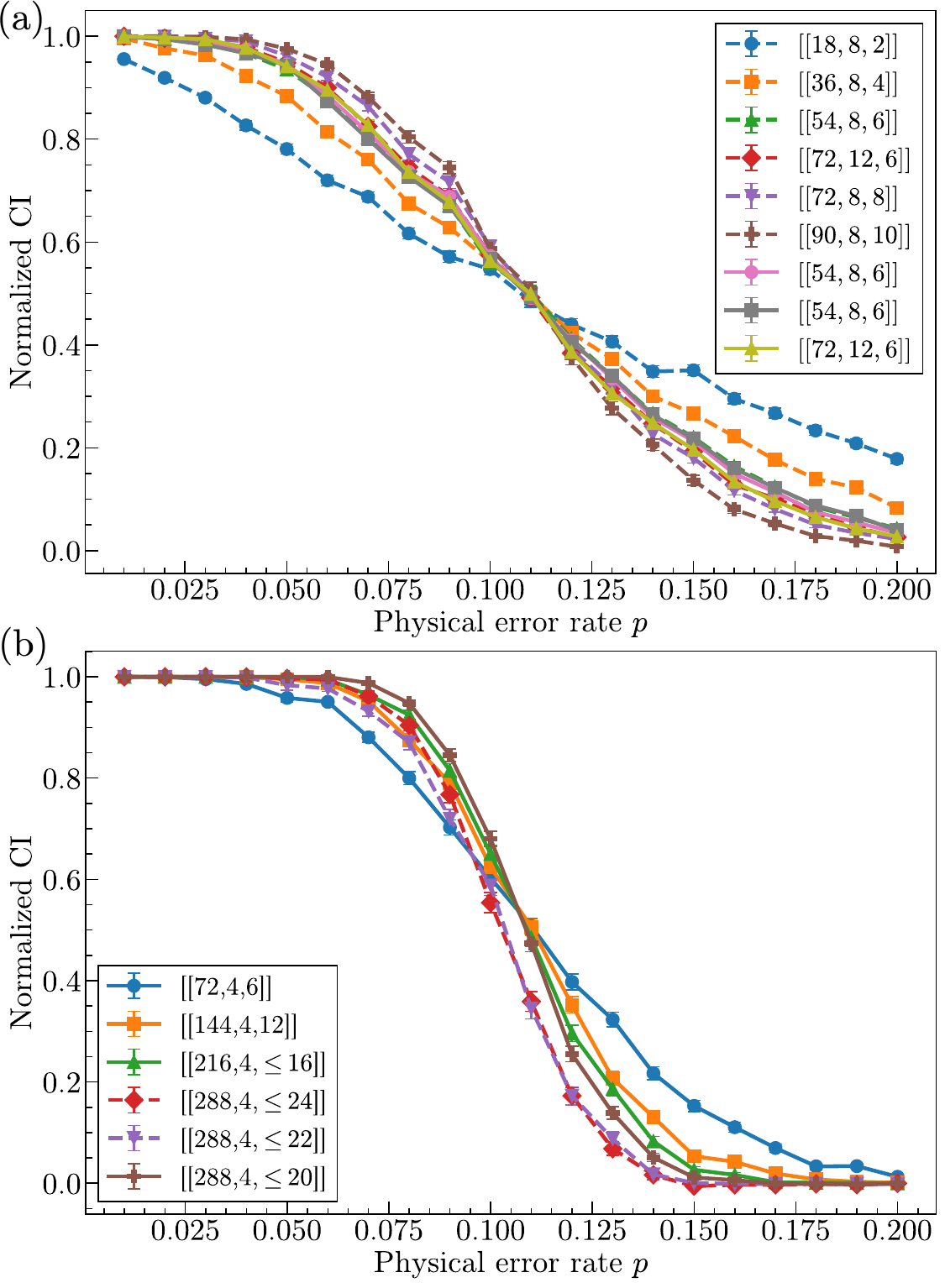}
    \caption{
    {\bf Coherent information of a base code and its lifts.}  The coherent information is normalized by the number of logical qubits $k$ so that we can compare codes with different $k$. 
    % (CI) for the lifted family of an LP code. 
    % Insets magnifies the crossing point. 
    {\bf (a)} We lift a BB base code with the code parameter $\llbracket18, 8,  2\rrbracket$ to obtain the codes in Table~\ref{Table:LiftedLPforCI} (the upper block). Dotted curves correspond to the codes that remain within the BB family, whereas solid curves correspond to non-BB lifted codes.
    {\bf (b)} We lift the non-abelian base code in Example~\ref{Ex:NonAbelianLP} to obtain the family in Table~\ref{Table:LiftedLPforCI} (the lower block). Solid curves correspond to the codes that show a crossing, whereas dotted curves correspond to those that do not.
    }
    \label{fig:CI_FiniteSizeScaling}
\end{figure}

\vspace{5pt} \noindent {\bf Finite-size crossing.} The observable we choose to study \emph{mixed-state phases}~\cite{Lee2025symmetryprotected, LeePRXQ2023, PRXQuantum.5.020343, vijay2025informationcriticalphasesdecoherence, yangPRX2026} of the LP code is coherent information~\cite{PhysRevA.54.2629} under Pauli errors. In this setting, the coherent information can be expressed in terms of a disordered classical statistical-mechanical model~\cite{hlfh-86yz, PhysRevA.111.032402}. We use this mapping and a Markov chain Monte Carlo (MCMC) method~\cite{Metropolis1953, Hastings1970} described in Appendix~\ref{App:SecCI}. Specifically, we combine standard Metropolis updates with parallel tempering~\cite{doi:10.1143/JPSJ.65.1604} to accelerate thermalization. For each disorder realization, we simulate $60$ replicas, discard the first $10{,}000$ Metropolis updates for thermalization, and then perform $100{,}000$ additional updates for measurement. The temperature ladder is optimized using the adaptive procedure of Ref.~\cite{Vousden2015}, and the final results are averaged over $1{,}000$ independent disorder realizations.

First, we study a family of LP codes obtained by lifting a BB code. We choose a base code defined for the ring $\F_2[\Z_3\times \Z_3]$, with the boundary maps
\begin{align}
    a = 1 + y + y^2, \quad b = 1 + x + x^2.
\end{align}
The code parameter is $\llbracket18, 8, 2\rrbracket$~\cite{symons2025sequences}. We lift the base code with the groups $K = \Z_2, \Z_3, \Z_4, \Z_2\times \Z_2, \Z_5$. The resulting lifted codes are listed in the upper block of Table~\ref{Table:LiftedLPforCI}.
BB codes can be represented as two-dimensional translationally invariant stabilizer codes~\cite{Haah2013, rmy6-9n89}. Therefore, a conventional thermodynamic family can be obtained by replacing $\mathbb Z_3\times\mathbb Z_3$ with increasingly large translation groups $\mathbb Z_n\times\mathbb Z_m$. However, the lifts considered here are more general: several of the codes in Table~\ref{Table:LiftedLPforCI} are not members of such a translationally invariant BB family, even though they are obtained by lifting the same base code.

The results are shown in Fig.~\ref{fig:CI_FiniteSizeScaling}(a). Because the maximum value of coherent information is $k$, which is the number of logical qubits, we plot the normalized coherent information $I_C/k$ to compare codes with different values of $k$. 
As the code size increases, the curves cross near $p_c\simeq0.11$, which is the critical value observed for $\Z_2$ toric code under independent $X$ noise~\cite{hlfh-86yz}. 
For the BB members of the family, this agrees with the interpretation of BB codes as multiple coupled copies of the toric code~\cite{86j7-cmsw, Haah2013, 10.1063/5.0021068, Bombin2012, Haah2017, Bombin2014}. 
Across the finite lifts studied here, the normalized coherent-information curves exhibit an apparent common crossing. This is compatible with the toric-code threshold, including for several non-BB lifts. However, because these codes do not yet constitute a structurally specified thermodynamic sequence, the present data do not establish a common asymptotic critical point.

\newcommand{\nonabelian}[1]{%
    \textcolor{red!70!black}{%
        \ensuremath{#1}%
    }%
}
\begin{table}
\centering
\setlength{\tabcolsep}{2pt}      % default is about 6pt
\renewcommand{\arraystretch}{1.05}
\begin{tabular}{c|c|c|c}
\toprule
$|K|$ & Group $K$ & Group $H$ & $\llbracket n,k,d\rrbracket $\; \\
\midrule
1 & $\{e\}$ & $\Z_3\times\Z_3$
& $\llbracket18,8,2\rrbracket$\\

2 & $\Z_2$ & $\Z_6\times\Z_3$
& $\llbracket36,8,4\rrbracket$\\

3 & $\Z_3$ & $\Z_9\times\Z_3$
& $\llbracket54,8,6\rrbracket$\\

4 & $\Z_2\times\Z_2$ & $\Z_6\times\Z_6$
& $\llbracket72,12,6\rrbracket$\\

4 & $\Z_4$ & $\Z_{12}\times\Z_3$
& $\llbracket72,8,8\rrbracket$\\

5 & $\Z_5$ & $\Z_{15}\times\Z_3$
& $\llbracket90,8,10\rrbracket$\\

3 & $\Z_3$
& \nonabelian{(\Z_3\times\Z_3)\rtimes\Z_3}
& $\llbracket54,8,6\rrbracket$\\

3 & $\Z_3$
& \nonabelian{\Z_9\rtimes\Z_3}
& $\llbracket54,8,6\rrbracket$\\

4 & $\Z_2\times\Z_2$
& \nonabelian{\Z_3\times A_4}
& $\llbracket72,12,6\rrbracket$\\
\midrule
1 & $\{e\}$ &$S_3 \times S_3$ & $\llbracket 72, 4, 6\rrbracket$\\
2 & $\Z_2$ &$(\Z_3\rtimes\Z_4)\times S_3$ & $\llbracket144, 4, 12\rrbracket$\\
3 & $\Z_3$ &$((\Z_3\times\Z_3)\rtimes\Z_3)\rtimes (\Z_2\times\Z_2)$ & $\llbracket216, 4, \leq 16\rrbracket$\\
4 & $\Z_4$ &$(\Z_3\times\Z_3)\rtimes {\mathrm{Q}}D_{16}$ & $\llbracket288, 4, \leq 22\rrbracket$\\
4 & $\Z_4$ &\nonabelian{(\Z_3\rtimes\Z_8)\times S_3} & $\llbracket288, 4, \leq 24\rrbracket$\\
4 & $\Z_4$ &\nonabelian{(\Z_3\times\Z_3)\rtimes (\Z_8\rtimes\Z_2)} & $\llbracket288, 4, \leq 22\rrbracket$\\
\bottomrule
\end{tabular}
\caption{\textbf{
The lifted LP codes studied in Sec.~\ref{Sec:CI}.} We calculate the coherent information for these codes and study the finite-size crossing. The codes in the upper block are obtained from a BB code as a base code. The boundary maps of the base code are $a = 1 + y + y^2, b = 1 + x + x^2$ over the polynomial ring $\F_2[x, y]/(x^3 - 1, y^3 - 1)$. Non-BB instances are highlighted with the red color. The codes in the lower block are obtained from the base code in Example~\ref{Ex:NonAbelianLP}.
We extend the group $G$ to the group $H$ using the group $K$ shown in the table. The codes that do not show crossings are highlighted with the red color.
}
\label{Table:LiftedLPforCI}
\end{table}

Second, we study the family obtained by lifting the non-abelian LP code in Example~\ref{Ex:NonAbelianLP}. We lift the base code with the kernels $K = \Z_2, \Z_3, \Z_4$, and the resulting codes are provided in Table~\ref{Table:LiftedLPforCI}. As in the Abelian case, we consider independent $X$ errors and evaluate the normalized CI. The results are shown in Fig.~\ref{fig:CI_FiniteSizeScaling}(b). 
Here, we have different behavior. 
The complete set of lifts does not exhibit a single common crossing. A subset of the curves intersects near $p \sim 0.105$, whereas the remaining extension choices shift the apparent crossing toward lower error rates. Since this subset has not been selected by an independent structural criterion, the observation does not define a thermodynamic branch. It instead demonstrates that the covering relation to a common base code is insufficient to determine common finite-size scaling.

\vspace{5pt} \noindent {\bf Remark.} Although some selected lifts exhibit well-defined crossings, this behavior is not determined by the local Tanner-graph incidence structure alone, just as identical local interaction rules can lead to different thermodynamic behavior on square and Bethe lattices. For the statistical-mechanical models associated with decohered CSS codes~\cite{hlfh-86yz,PhysRevA.111.032402}, different group extensions can likewise produce models with distinct patterns of ${\cal O}(1)$-neighborhood interaction graphs and different scaling of $(k,d)$/large-scale domain-wall structures. Some lifts may remain analogous to multiple copies of the toric code, whereas others define genuinely different code families~\cite{86j7-cmsw, Haah2013, 10.1063/5.0021068, Bombin2012, Haah2017, Bombin2014}. A physically meaningful thermodynamic family may therefore require additional constraints on finite-radius and global geometric properties that select a particular branch among the possible lifts of a fixed base code. We leave the identification of such conditions to future work.

\section{Conclusion and outlook}\label{Sec:conclusion_outlook}

In this work, we developed a systematic algebraic procedure for lifting a base LP code to larger LP codes. The construction extends the underlying group and compatibly lifts the two constituent classical codes. Here we explored three applications:
\begin{itemize}[leftmargin=13pt]
    \item \textbf{Code construction.}
    A search over the resulting lifts yields several LP codes whose parameters improve on previously reported instances.

    \item \textbf{Logical operations.}
    The chain maps induced from the lifting allow logical-operation gadgets to be transferred from a projected code to its lifts. In \emph{lifted code surgery}, an ancilla constructed for the smaller projected code is coupled to the lifted code through composite chain maps, which was further boosted to preserve the distance. The result reduced the ancillary space overhead in the examples studied. We also give conditions under which lifted CC codes inherit parallel product surgery from their base codes, thereby retaining their logical addressability.

    \item \textbf{Thermodynamic families.} The coherent-information curves exhibit finite-size crossings for selected lifts, consistent with common thermodynamic behavior. The crossings are not universal across all lift branches, however: a common base code and identical local incidence structure do not determine a unique thermodynamic family. Different lifts can have distinct global geometries, code parameters, and critical behavior.
\end{itemize}

Our work leaves several interesting directions. 

First, how do logical degrees of freedom transform under a lift? For odd-index lifts, the transfer maps identify logical operators inherited from the base code, whereas no analogous relation follows directly for even-index lifts. A structural description that separates inherited logical operators from those created by the lift would sharpen parameter bounds and provide explicit logical bases. Relating the algebraic construction developed here to the Freedman--Hastings geometric framework may offer a useful route to such a description~\cite{guemard2025liftingcsscodehandlebody,freedman2021buildingmanifoldsquantumcodes}.

Second, which logical gadgets can utilize this lifting structure? Our results for code surgery and PPS suggest that the induced chain and cochain maps may provide a route to efficiently implement other logical-operation gadgets. One main question here is how to preserve the distance in this inherited gadget. The other question is to determine when transversal gates, homomorphic CNOTs and measurements, or code-switching protocols for a base code induce corresponding operations on its lifts~\cite{Breuckmann2024,vf7v-cpq9,berthusen2025automorphismgadgetshomologicalproduct,Breuckmann2026,ghhp-cytl,li2026transversaldimensionjumpproduct,davenport2026generalizedbicyclecodescyclic,10812769,Xu2025,PRXQuantum.4.030301}. Addressing this question requires understanding the action of the transfer maps on explicit logical bases, for which algebraic descriptions of LP-code logical operators may be useful~\cite{davenport2026generalizedbicyclecodescyclic,lee2026logicalspectroscopyliftedproductcodes}.

Third, which branches of lifts define the same thermodynamic family? Our coherent-information results indicate that a common base code and identical local incidence data are insufficient: different lifts can have different larger finite-radius neighborhoods, global connectivity, and domain-wall geometry, leading to distinct critical behavior. A classification of lift branches must therefore incorporate geometric information beyond the base Tanner graph. Natural candidates include finite-radius neighborhood growth, the intrinsic locality dimension~\cite{lu2026intrinsiclocalitydimensionquantum}, and global expansion properties such as Cheeger constants. These quantities may control not only code parameters but also the domain-wall free-energy landscape and the emergence of topological spin-glass order~\cite{placke2025expansioncreatesspinglassorder,placke2024topologicalquantumspinglass}. Since different lifts of the same graph can have sharply different expansion properties~\cite{AMIT_LINIAL_2006,bordenave2019newprooffriedmanssecond}, lift families provide a controlled setting in which to determine which combination of local, geometric, and homological data governs both asymptotic code performance and decoherence-induced phases.

Several more practical questions can be pursued within the same setting, including whether decoders and hardware-efficient implementations transfer from a base code to its lifts~\cite{gu2026scalableneuraldecoderspractical,tan2026generalizedmatchingdecoders2d,Xu2024}, and whether the construction extends to other high-rate group-algebra codes and quantum Tanner codes~\cite{okada2026highgirthregularquantumldpc,cain2026shorsalgorithmpossible10000,lee2026logicalspectroscopyliftedproductcodes,zhao2026ultrahighratequantumerrorcorrection,kasai2026breakingorthogonalitybarrierquantum,leverrier2022quantumtannercodes,radebold2025explicit}.

\begin{acknowledgments}

We thank Yu-An Chen, Ryohei Kobayashi, Nobuyuki Yoshioka, Akash Vijay, and Tai-Hsuan Yang for fruitful discussions.
YH acknowledges the support by Murata Overseas Scholarship and the KDDI Foundation Overseas Scholarship.
JYL acknowledges the support by the faculty startup grant at the University of Illinois, Urbana-Champaign and the IBM-Illinois Discovery Accelerator Institute. 

\emph{Note added}: Upon completion of this work, we noticed a recent work~\cite{aydin2026breakingbicycleframecosetbased} that also found the lifted BB code instance $\llbracket128, 16, 12\rrbracket$ in Table~\ref{table:Lifted_BBcodes}.

\end{acknowledgments}

\appendix

\section*{Data availability}

The boundary maps for all the lifted codes obtained in this work are available from the corresponding author upon reasonable request. We also provide the boundary maps of some of the codes in the Appendix~\ref{App:boundarymaps} when the group has a simple structure.

\section{Proofs of the propositions}
We provide detailed proofs of the theorems in the main text.
\subsection{Proof of Proposition~\ref{theorem:liftingLPbyGroupExtensions}}\label{Sec:ProvingLiftingTheorem}

We complete the proof of Proposition~\ref{theorem:liftingLPbyGroupExtensions}.

\begin{proof} {\bf (1).} 
    We first construct the graph projection $p$ induced by the group homomorphism $\pi$. Let $\mathcal{T}(Q)$ denote the Tanner graph of code $Q$. A vertex of $\mathcal{T}(\widetilde{Q})$ is labeled by a pair $(h,\mu)$, where $h\in H$ and $\mu$ labels a basis element of the corresponding chain module; for example for an $X$-check vertex, a basis element $\mu \in \widetilde{A}_1\otimes \widetilde{B}_1$. Similarly, a vertex of $\mathcal{T}(Q)$ is labeled by $(g,\mu)$, where $g\in G$. See Fig.~\ref{Fig:LiftbySES} for illustration.  We define the vertex projection by
    \begin{align}
        p_V:V(\mathcal{T}(\widetilde{Q}))
        &\longrightarrow
        V\left(\mathcal{T}({Q})\right), \nonumber \\
        (h,\mu)
        &\longmapsto
        (\pi(h),\mu),
    \end{align}
    where $V(\mathcal{T}(Q))$ denotes the set of vertices in $\mathcal{T}(Q)$.

   We show that $p_V$ preserves adjacency. Namely, if two vertices $v_1$ and $v_2$ are connected by an edge in $\widetilde{Q}$, then their images $p_V(v_1)$ and $p_V(v_2)$ are connected by an edge in $Q$. Each edge of $Q$ is specified by each group element appearing with nonzero coefficient in an entry of the boundary maps. 
   Fix an entry $a_{\mu\nu}\in\mathbb F_2[G]$ of one of the boundary maps and write
    \begin{align}
    a_{\mu\nu}&=
    \sum_{s\in G}c_{\mu\nu}(s)s, \quad c_{\mu \nu}(s) \in \mathbb{F}_2, \nonumber \\
    S_{\mu\nu}
    &:=
    \operatorname{supp}(a_{\mu\nu})
    =
    \left\{
    s\in G \, | \,
    c_{\mu\nu}(s)\neq 0
    \right\}.
    \end{align}
    Here $a_{\mu\nu}$ is simply the sum of the elements in $S_{\mu\nu}$. Each $s\in S_{\mu\nu}$ gives a separate family of edges in the binary expansion of the boundary map, which we call a \emph{supported group element}.

    Consider first an entry acting by left multiplication. For every $g\in G$ and $s\in S_{\mu\nu}$, the corresponding base edge is $e(g,\mu,\nu;s): (g,\mu) \longrightarrow (sg,\nu)$.
    For each $s\in S_{\mu\nu}$, choose a kernel label $\gamma_{\mu\nu}(s)\in K$ and define
    \begin{align}
    t_{\mu\nu}(s)
    &:=
    \chi\left(\gamma_{\mu\nu}(s)\right)\sigma(s)
    \in H.
    \end{align}
    The corresponding entry of the lifted boundary map is
    \begin{align} \label{eq:lifted_boundarymap}
    \widetilde{a}_{\mu\nu}
    &=
    \sum_{s\in S_{\mu\nu}}t_{\mu\nu}(s) \in R_H.
    \end{align}
    By construction for every $s\in S_{\mu\nu}$, $\pi\left(t_{\mu\nu}(s)\right) = s$.
    The lifted edge is then defined as $\widetilde{e}(h,\mu,\nu;s): (h,\mu) \longrightarrow \left(t_{\mu\nu}(s)h,\nu\right)$ for every $h\in H$. Under the vertex projection,
    \begin{align}
    p_V\left(t_{\mu\nu}(s)h,\nu\right)
    =
    \left(
    \pi\left(t_{\mu\nu}(s)h\right),
    \nu
    \right) =
    \left(
    s\pi(h),
    \nu
    \right).
    \end{align}
    Thus $\widetilde{e}(h,\mu,\nu;s)$ projects to the base edge $e\left(\pi(h),\mu,\nu;s\right)$.
    The same argument applies to entries acting by right multiplication (important for non-abelian case), for which the base and lifted edges take the forms
    \begin{align}
    (g,\mu)
    &\longrightarrow
    (gs,\nu), \quad 
    (h,\mu)
    \longrightarrow
    \left(ht_{\mu\nu}(s),\nu\right),
    \end{align}
    respectively.  Therefore, every lifted edge projects to its corresponding base edge. This defines a graph map
    $p:\mathcal{T}(\widetilde{Q})\longrightarrow\mathcal{T}(Q)$.

    \vspace{5pt} \noindent {\bf (2).} Next we show that $p$ is surjective. Take a vertex $(g,\mu)$ in $\mathcal{T}(Q)$. Its preimage under $p_V$ is
    \begin{align}
    p_V^{-1}(g,\mu)
    =
    \left\{
    \left(F(\gamma,g),\mu\right) \,|\, \gamma\in K \right\},
    \end{align}
    where $F: K \times G \rightarrow H$ is the bijection defined in Eq.~\eqref{Eq:BijectionofSES}. Thus $p$ is surjective, and the cardinality of the preimage is exactly $|K|$ for all the vertices in $\mathcal{T}(Q)$.

    \vspace{5pt} \noindent {\bf (3).} For local bijectivity, let $(h,\mu)$ be a vertex of $\mathcal{T}(\widetilde{Q})$ and let $(g,\mu) = p_V(h,\mu)$.
    An edge incident to $(g,\mu)$ is specified by a position $(\mu,\nu)$ in a boundary map together with a supported group element $s\in S_{\mu\nu}$. 
    By construction in Eq.~\eqref{eq:lifted_boundarymap}, for each $s$ there is a unique $t_{\mu\nu}(s)$; it simply means that every edge incident to $(g,\mu)$ has a unique incident lift at $(h,\mu)$. Conversely, every edge incident to $(h,\mu)$ arises from one of these lifted supported terms and therefore projects to an edge incident to $(g,\mu)$.
    
    Thus $p$ induces a bijection between the edges incident to $(h,\mu)$ and those incident to $(g,\mu)$. Therefore, $p$ is locally bijective. Since every vertex of $\mathcal{T}(Q)$ has exactly $|K|$ preimages, the Tanner graph of $\widetilde{Q}$ is a $|K|$-cover of the Tanner graph of $Q$.
        \end{proof}

\subsection{Proof of Proposition~\ref{theorem:projectingLPbyGroupExtensions}} \label{Sec:ProvingProjectionTheorem}

    \begin{proof}
    We first verify the chain-map identities. Since all maps are $\F_2$-linear, it suffices to consider a basis vector. We first consider basis vectors in $Q_2$:
    \begin{align}
    e_i^A h\otimes e_j^B
    \in
    Q_2,
    \qquad
    h\in H,
    \end{align}
    where $e_i^{A}$ $(e_j^B)$ is a basis of $A_1 \; (B_1)$. We also denote the basis for $A_0 \; (B_0)$ as $f_i^A$ $(f_j^B)$. By the definition of $\partial_2$,
    \begin{align}\nonumber
        \partial_2&\left(e_i^A h\otimes e_j^B\right) = \\
        &\sum_k f_k^A(\partial_A)_{ki}h\otimes e_j^B + \sum_k e_i^A\otimes h(\partial_B)_{kj}f_k^B.
    \end{align}
    Applying $p_1$ gives
    \begin{align}
        p_1\circ\partial_2\left(e_i^A h\otimes e_j^B\right) &= \sum_k f_k^A\Pi\left((\partial_A)_{ki}h\right)\otimes e_j^B \nonumber \\
        &\quad+ \sum_k e_i^A\otimes \Pi\left(h(\partial_B)_{kj}\right)f_k^B.
    \end{align}
    Since $\Pi$ is a ring homomorphism,
    \begin{align}
        \Pi\left((\partial_A)_{ki}h\right) &=\Pi\left((\partial_A)_{ki}\right)\Pi(h) = \left({\partial}_A^\pi\right)_{ki}\Pi(h),\\
        \Pi\left(h(\partial_B)_{kj}\right) &= \Pi(h)\Pi\left((\partial_B)_{kj}\right) = \Pi(h)\left({\partial}_B^\pi\right)_{kj}.
    \end{align}
    These are precisely the terms obtained by first applying $p_2$ and then the projected boundary map $\partial_2^\pi$. Hence,
    \begin{align}
        p_1\circ\partial_2 = \partial_2^\pi\circ p_2.
    \end{align}
    The other identity $p_0\circ\partial_1 =\partial_1^\pi\circ p_1$
    follows by the same argument. Therefore, $p_\bullet$ defines a chain map over $\F_2$.
    
    Next, we consider the cochain maps. 
    % This requires some care because the cochain complex is defined as the $\F_2$-dual of the chain complex, rather than as its naive $R$-linear dual. 
    The $\F_2$-dual LP complex can be expressed as the balanced product of the classical cochain complexes ${}$
    \begin{align}\label{Eq:Cochains}
        A^0&\xrightarrow{\partial_A^T} A^1, \qquad B^0 \xrightarrow{\partial_B^T} B^1,
    \end{align}
    whose coboundary maps are given by
    \begin{align}\label{Eq:CoboundaryMaps}
        \left(\partial_A^T\right)_{ji} &= \left(\partial_A\right)_{ij}^{\ast},&
        \left(\partial_B^T\right)_{ji} &= \left(\partial_B\right)_{ij}^{\ast}.
    \end{align}
    Here, $\ast:R_H\rightarrow R_H$ denotes the group-algebra involution
    \begin{align}
        \bigg(\sum_{h\in H}a_h h\bigg)^{\ast} = \sum_{h\in H} a_h h^{-1}.
    \end{align}    
% The canonical pullback of $\partial_A^T$ on the $R$-duals are
% \begin{align}
%     A^1\xlongrightarrow{{\partial_A^T}'}A^0,
% \end{align}
% where the boundary map is defined by 
% \begin{align}\label{Eq:RdualBdMap}
%     \left({\partial_A^T}'\right)_{ji} = \left(\partial_1^A\right)_{ij}.
% \end{align}
% Note that the involution does not appear here.
    
    % Let $f:R_H\rightarrow\F_2$ extract the coefficient of the identity element $f\left(\sum_h a_h h\right) = a_e$. The $R$-dual $a^1\in A^1$ is identified with the $\F_2$-dual via
    % \begin{align}\nonumber
    % \Theta: \operatorname{Hom}_{R_H} (A_1, R_H)\longrightarrow \operatorname{Hom}_{\F_2}(A_1, \F_2)\\
    % \Theta(a^1)(a_1)\coloneqq f(a^1(a_1))\in \F_2
    % \end{align}
    % for $a_1\in A_1$. Under this identification, the indicator functional dual to the $\F_2$-basis vector $e_i^A h$ is represented by $h^{-1}e_A^i$. Consequently, the $\F_2$-dual of a boundary matrix is obtained by transposing it and applying $\ast$ entrywise, yielding Eqs.~\eqref{Eq:CoboundaryMaps}. The same argument applies to $B_\bullet$, with the module handedness reversed.

    Taking the balanced product over $R$ of the cochain complexes in Eq.~\eqref{Eq:Cochains} gives
    \begin{equation}
        \begin{tikzcd}
            & B^0\otimes_R A^1 & \\
            B^0\otimes_R A^0 && B^1\otimes_R A^1\\
            & B^1\otimes_R A^0 &
            \arrow["{I\otimes\partial_A^T}", from=2-1, to=1-2]
            \arrow["{\partial_B^T\otimes I}", from=1-2, to=2-3]
            \arrow["{\partial_B^T\otimes I}", from=2-1, to=3-2]
            \arrow["{I\otimes\partial_A^T}", from=3-2, to=2-3]
        \end{tikzcd}.
    \end{equation}
    Under the identification above, this complex is naturally isomorphic to the $\F_2$-dual of the original LP chain complex.

    Specifically, let ${\partial}_A^T$ and ${\partial}_A^{\pi, T}$ denote the classical coboundary maps of $A^\bullet$ and $A_\pi^\bullet$, respectively. They satisfy ${\partial}_A^{\pi, T} = \Pi\left({\partial}_A^T\right)$,
    where $\Pi$ is applied entrywise. This follows from the fact that $\Pi$ commutes with the group-algebra involution. The analogous relation holds for ${\partial}_B^T$ and $\partial_B^{\pi, T}$. Therefore, by the same argument as in the chain-complex case, the projection maps commute with the coboundary maps and define a cochain map.

\end{proof}

\section{Details of code surgery}\label{App:surgery}
In this section, we provide further details of the surgery protocol. We first discuss the logical operators of the merged codes, whose minimal weight determines the phenomenological distance of code surgery.

Define the $i$-th homology group of a chain complex $Q_\bullet$ as $H_i(Q):= \ker \partial_i / \Im \partial_{i+1}$. Then the logical $X$ operators of the merged code in Eq.~\eqref{eq:surgery_diagram_base} are given by~\cite{weibel1994introduction} 
\begin{align}\label{Eq:MergedCodeLogicalX}
    H_1(\operatorname{cone} (\Gamma)) \simeq \left[\frac{H_1(Q)}{ \hat{\Gamma}_1[ H_1(A)] }\right]\oplus \ker(H_0(A)\xrightarrow[]{\hat{\Gamma}_0}H_0(Q)).
\end{align}
The first term represents the logical $X$ operators of the data code that remain after the measurement, while the second term represents additional logical operators originating from the ancilla code, which may be regarded as gauge logical operators of the surgery construction.

The physical interpretation of the second term is as follows. Among the logical $X$ operators of the ancilla code $A$, those mapped to the trivial class in $H_0(Q)$ under $\hat{\Gamma}_0$ are promoted to logical $X$ operators in the merged code. Let $l_X^A$ be a logical $X$ operator of $A$. Its homology class lies in the kernel of $\hat{\Gamma}_0$ precisely when
\begin{align}\label{Eq:App_gaugelogical}
    \Gamma_0 l_X^A \in \Im \partial_1\Longleftrightarrow \exists v\in Q_1 \,\,\,{\rm s.t.} \,\, \Gamma_0 l_X^A = \partial_1 v.
\end{align}
The merged $Z$-check matrix is 
\begin{align}
    \begin{pmatrix}
        \partial_0^A & 0\\
        \Gamma_0 & \partial_1
    \end{pmatrix}.
\end{align}
Equation~\eqref{Eq:App_gaugelogical}, together with $\partial_0^A l_X^A=0$, implies that $(l_X^A,v)$ commutes with all merged $Z$ checks. Thus, Eq.~\eqref{Eq:App_gaugelogical} is exactly the condition under which an ancilla logical operator can form a logical operator of the merged code by dressing a product of Pauli $X$ operators on the data code.

Similarly, consider the dual complex obtained by transposing the boundary maps and the chain map of the original complex:
\begin{equation}
\label{eq:surgery_diagram_base_cochain}
    \begin{tikzcd}
    {Q^0} & {Q^1} & {Q^2} \\
    {A^{-1}} & {A^0} & {A^1} 
    \arrow["{\partial^0}", from=1-1, to=1-2]
    \arrow["{\partial^1}", from=1-2, to=1-3]
        \arrow["{{\partial}^{-1}_A}", from=2-1, to=2-2]
    \arrow["{{\partial}^0_A}", from=2-2, to=2-3]
        \arrow["{\Gamma^0}", from=1-1, to=2-2]
    \arrow["\Gamma^1", from=1-2, to=2-3]
    \end{tikzcd}.
\end{equation}
The logical $Z$ operators of the merged code are characterized by
\begin{align}
    H^1(\operatorname{cone} (\Gamma)) \simeq \left[\frac{H^0(A)}{\hat{\Gamma}^0 [H^0(Q)]}\right]\oplus \ker(H^1(Q)\xrightarrow[]{\hat{\Gamma}^1}H^1(A)).
\end{align}
Here, the first term represents the gauge logical operators originating from the ancilla code, while the second term represents the logical $Z$ operators of the data code that remain after the measurement.

The two terms admit an interpretation analogous to that for logical $X$ operators, with the roles of the data and ancilla codes exchanged. Specifically, the unmeasured gauge $Z$ logical operators of the ancilla code survive as logical operators of the merged code~\cite{cross2025improvedqldpcsurgerylogical}, while the surviving data-code logical $Z$ operators are dressed by Pauli $Z$ operators supported on the ancilla code. In particular, we see that the distance of the data logical $Z$ operators should always be at least $d_Z$~\cite{PhysRevX.15.021088, zheng2025highratesurgeryconstantoverheadlogical}.

The non-trivial issue is the distance of the data logical $X$ operators, the first term in Eq.~\eqref{Eq:MergedCodeLogicalX}. A representative has a weight $d_X$, but one can deform this by multiplying it by merged $X$-checks and gauge logical $X$ operators, potentially producing an equivalent representative with weight smaller than $d_X$. Several works have been conducted to address this issue~\cite{doi:10.1126/sciadv.abn1717, cross2025improvedqldpcsurgerylogical, Williamson2026, PhysRevX.15.021088, zheng2025highratesurgeryconstantoverheadlogical, 1g44-jp62}. When the merged code has no gauge logical operators, it is sufficient to improve the expansion properties of the boundary map $\partial_1^A$~\cite{Williamson2026, 1g44-jp62, PhysRevX.15.021088}. When gauge logical operators are present, the analysis is more involved, but one can still guarantee preservation of the merged-code distance in some cases~\cite{doi:10.1126/sciadv.abn1717, zheng2025highratesurgeryconstantoverheadlogical}. Although these approaches provide sufficient expansion conditions, in practice it is often enough to add qubits and checks randomly until the desired merged-code distance is preserved~\cite{zheng2025highratesurgeryconstantoverheadlogical}. We adopt the latter approach in constructing the ancilla codes.

\subsection{Low rate surgery}\label{App:LowRateSurgery}

In the ordinary low-rate (graph) surgery, the initial ancilla graph is constructed by the standard gauging and perfect-matching procedure~\cite{Williamson2026}.

Our heuristic approach, namely \emph{lifted graph surgery}, uses this established protocol: we initialize an ancillary code for the base code, and takes the composite map with the transfer map. After initialization, we randomly add edges until the merged code distance is preserved. While this approach can only measure logical operators transferred from the logicals of the base code, this approach can reduce the ancilla overhead by a constant factor.

The key to reducing the ancilla overhead is to choose a low-weight representative of the logical operator to be measured. We search for such a representative using a randomized algorithm based on random permutations and Gaussian elimination. 

Suppose that the target is a logical $X$ operator. We first construct a matrix $G$ whose rows consist of the $X$-type stabilizer generators together with a representative of the target logical $X$ operator. We then permute the columns of $G$ by a random permutation $\sigma$, perform Gaussian elimination, and apply the inverse permutation $\sigma^{-1}$ to the resulting columns to obtain a new generating matrix $G'$. Each row of $G'$ lies in the span of the target logical operator and the $X$-type stabilizers. We determine whether a row represents the target logical class by computing its pairing with the conjugate logical $Z$ operator, and among the valid representatives, we select the one with the smallest weight. We then construct the ancilla code using this optimized representative to obtain the results reported in Tables~\ref{tab:q41-wy-overhead} and~\ref{tab:q41-wy-x1-distributions}. Note that the low-weight representatives are chosen independently for the lifted code surgery and direct code surgery.

\subsection{High rate surgery}\label{App:HighRateSurgery}

The algorithm is based on the randomized construction described in Ref.~\cite{zheng2025highratesurgeryconstantoverheadlogical}; see Appendix~E of its Supplemental Material. In direct code surgery, the ancilla code and the associated chain maps are constructed directly for the lifted data code. In lifted code surgery,  the ancilla code is first constructed for the base code, and its chain maps are composed with the transfer maps to couple the ancilla to the lifted code. The two procedures differ only in their initialization. After initialization, both use the same expansion-enhancement procedure; additional $Z$ checks are introduced randomly to eliminate low-weight representatives of the surviving data logical $X$ operators. These data logical operators correspond to the first term in Eq.~\eqref{Eq:MergedCodeLogicalX}, whereas the logical operators represented by the second term are treated as gauge $X$ operators.

Empirically, we find that choosing suitable representatives of the measured logical operators is crucial to the success of the algorithm. A representative is considered good if (i) the union of the supports of the measured logical operators is as small as possible and (ii) this union supports as few unmeasured logical operators as possible, ideally none. We search for such representatives using a heuristic logical-basis search described below.

We construct a matrix $G$ whose rows consist of the $m$ logical $X$ operators to be measured together with the $r_X$ $X$-type stabilizer generators. After applying a random column permutation and Gaussian elimination, we obtain a matrix $G'$. We then compute the pairing of every row of $G'$ with the $m$ conjugate logical $Z$ operators. These pairings form a binary matrix $G''$ of size $(m+r_X)\times m$, where each row of $G''$ records the logical class represented by the corresponding row of $G'$. We order the rows of $G'$ by increasing weight and greedily select rows whose corresponding rows in $G''$ span the full $m$-dimensional space. This produces a low-weight set of representatives for the measured logical subspace.

To address the second criterion, we restrict the $Z$-stabilizer matrix to the union of the selected supports. Vectors in the kernel of this restricted matrix correspond to Pauli $X$ operators supported within the union that commute with all $Z$ checks, which should be either a product of $X$-checks or logical $X$ operators. We randomly sample such vectors and determine whether they represent nontrivial logical $X$ operators outside the measured logical subspace. Repeating this procedure for different randomized basis choices, we select the representatives whose combined support contains the fewest sampled unmeasured logical operators.

We select the logical representatives independently for lifted and direct surgery. For lifted surgery, we minimize the weight of the representatives in the base code while also minimizing the number of unmeasured logical operators supported on their transferred support in the lifted code. For direct surgery, both criteria are evaluated directly in the lifted code.

\section{Details of the numerics}
\subsection{Code search}\label{Appendix:DetailsofNumerics_codesearch}

We perform the numerical simulations for the code search using the programming language \texttt{GAP}. Most of the subroutines are discussed in the main text. We discuss a technique to reduce the search space below.
The complete procedure is summarized in Algorithm~\ref{alg:extension-lift-search}.

As explained in Sec.~\ref{Sec:LiftingLPcodes}, a lift is specified by assigning an element of $K$ to every supported term in the two boundary maps for $A$ and $B$. For example, consider the BB code
\begin{align}\label{Eq:sample_BB}
    a
    &=
    1+x+y^2,
    &
    b
    &=
    1+y+x^2.
\end{align}
Since the two polynomials contain six supported elements in total, a naive search examines $|K|^6$ assignments. Many of these assignments, however, produce isomorphic lifted codes. Our goal is to remove part of this \emph{gauge redundancy} before performing the search.

The lift of an LP code is constructed by first lifting the two classical codes $A_\bullet$ and $B_\bullet$ and then taking their balanced product. Accordingly, we perform the gauge fixing on $A_\bullet$ and $B_\bullet$ separately. We then show that the gauge equivalent classical complexes yield isomorphic balanced-product LP codes.

We first consider a lift of the classical code $A_\bullet$. Here we introduce the notion of protograph, which is distinguished from the Tanner graph. The protograph $\mathcal P_A$ is the labeled bipartite multigraph defined directly from the group-algebra boundary map: its vertices index the $R_G$-module bases $(i,j)$ (of $A_1$ and $A_0$), and each group element in the support of a matrix entry $s_e\in\operatorname{supp}\bigl((\partial_A)_{ij}\bigr)$ defines a labeled edge $e_s:j\rightarrow i$. Thus, $\mathcal{P}_A$ is generally a multigraph, since a single matrix entry may contain several supported group elements. The protograph of the BB code in Eq.~\eqref{Eq:sample_BB} is an example of this.

On the other hand, the Tanner graph ${\cal T}(A_\bullet)$ is obtained after binary expansion by replacing each protograph vertex with a fiber indexed by $G$. 
Since $A_1$ and $A_0$ are right $R_G$-modules, an edge $e_s:j\rightarrow i$ labeled by $s_e$ connects $(j,g)\longrightarrow(i,s_e g)$ for every $g\in G$. The Tanner graph obtained in this way is called the \emph{right $G$-derived graph}. 
The same idea applies to $B$ as well.

We now consider the group extension $\widetilde{A}_\bullet$ . The protograph $\mathcal{P}_A$ is kept fixed, while every edge label $s_e\in G$ is replaced by the lifted label
\begin{align}\label{Eq:LiftedProtographVoltage}
    t_e
    =
    \chi(\gamma_e)\sigma(s_e)
    \in H,
    \qquad
    \gamma_e\in K.
\end{align}
Thus, the Tanner graph of $\widetilde{A}_\bullet$ is the right $H$-derived graph of $\mathcal{P}_A$, and index $|K|$-lift of ${\cal T}(A)$.  Similarly, the Tanner graph of $\tilde{B}_\bullet$ can be understood as the \emph{left} $H$-derived graph of the corresponding protograph, as $B_1$ and $B_0$ are left $R_G$-modules

% The construction can therefore be summarized as
% \begin{align}
%     \mathcal{P}_A
%     &\xrightarrow{\;G\text{-derived}\;}
%     \mathcal{T}(A),
%     &
%     \mathcal{P}_A
%     &\xrightarrow{\;H\text{-derived}\;}
%     \mathcal{T}(\widetilde{A}),
% \end{align}
% while leaving $\mathcal{T}(\widetilde{A})$ an index $|K|$-lift of $\mathcal{T}(A)$. 

Different choices of the kernel labels $\{\gamma_e\}$ in Eq.~\eqref{Eq:LiftedProtographVoltage} can produce isomorphic lifted classical codes. This gauge redundancy is described by \emph{vertex switching} in voltage-graph theory~\cite{GROSS1977273,Antoncic2014}. In the present setting, some of the $K$-valued lift component $\gamma_e$ are purely a gauge redundancy, and we can fix them to be the identity element $e_K \in K$. To reduce the search space, we choose spanning forests $F_A \subseteq \mathcal{P}_A$ and $F_B \subseteq \mathcal{P}_B$. A spanning forest contains every protograph vertex and restricts to a spanning tree on each connected component. In particular, it contains no cycles. As shown below, fiber relabeling can set the kernel labels on all forest edges to the identity.

\begin{proposition}[Gauge fixing]\label{Prop:GaugePermutationVoltage}
    Let $F_A$ and $F_B$ be spanning forests of the protographs $\mathcal{P}_A$ and $\mathcal{P}_B$, respectively. Every assignment of kernel labels on these protographs is equivalent, up to fiber relabeling, to an assignment satisfying $\gamma_e=e_K$ for every $e\in F_A\cup F_B$. The original and gauge-fixed
    constituent complexes are isomorphic as $R_H$-module chain complexes,
    and their balanced products define isomorphic LP codes.
\end{proposition}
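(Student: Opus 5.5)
We prove the statement by exhibiting explicit \emph{diagonal} change-of-basis isomorphisms on the constituent complexes, one kernel element per protograph vertex, and then checking that these survive the balanced product.

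\textbf{Step 1: the gauge action on $\widetilde{A}_\bullet$.} Choose an element $\kappa_v\in K$ for every protograph vertex $v$ of $\mathcal{P}_A$, that is, for every basis vector $e_j$ of $\widetilde{A}_1$ and $f_i$ of $\widetilde{A}_0$. Define new bases by $e_j' = e_j\,\chi(\kappa_j)$ and $f_i' = f_i\,\chi(\kappa_i)$. Since $\chi(\kappa)$ is a unit of $R_H$, these are again $R_H$-bases of the free right modules.

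We then compute the matrix of $\widetilde{\partial}_A$ in the new bases. Using right $R_H$-linearity, $\widetilde{\partial}_A(e_j') = \sum_i f_i'\,\chi(\kappa_i)^{-1}(\widetilde{\partial}_A)_{ij}\chi(\kappa_j)$. Hence each supported term $t_e=\chi(\gamma_e)\sigma(s_e)$ on an edge $e:j\to i$ is replaced by
\begin{align}
    t_e' = \chi(\kappa_i)^{-1}\chi(\gamma_e)\sigma(s_e)\chi(\kappa_j).
\end{align}
Because $\chi(K)=\ker\pi$ is normal, $\sigma(s_e)\chi(\kappa_j)\sigma(s_e)^{-1}\in\chi(K)$. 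Therefore $t_e' = \chi(\gamma_e')\sigma(s_e)$ with
\begin{align}
    \gamma_e' = \kappa_i^{-1}\gamma_e\,\alpha_{s_e}(\kappa_j),
\end{align}
where $\alpha_{s}$ denotes conjugation by $\sigma(s)$, transported to an automorphism of $K$ via $\chi$. In particular $\pi(t_e')=s_e$, so the new matrix is again a lift of the same base complex, of the form in Eq.~\eqref{Eq:LiftedBoundaryMapofA}. The change of basis is an isomorphism of $R_H$-module chain complexes by construction.

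\textbf{Step 2: killing the labels on a spanning forest.} We do this component by component. Pick a root $r$ of each tree of $F_A$, set $\kappa_r=e_K$, and proceed outward along the tree, for example by breadth-first search. When the edge $e:j\to i$ is traversed and one endpoint already has an assigned $\kappa$, we solve $\gamma_e'=e_K$ for the other endpoint:
\begin{align}
    \kappa_i = \gamma_e\,\alpha_{s_e}(\kappa_j), \qquad \text{or} \qquad \kappa_j=\alpha_{s_e}^{-1}(\gamma_e^{-1}\kappa_i).
\end{align}
Both equations are uniquely solvable because $\alpha_{s_e}$ is an automorphism. Since a forest has no cycles, each vertex is assigned exactly once and no consistency condition arises. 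This is the key point of the argument. Multiple edges between the same pair of vertices are harmless, because only one of them lies in $F_A$. The same argument applies to $\widetilde{B}_\bullet$, now with left modules: $e_j'=\chi(\kappa_j)e_j$, and the transformed labels become $\chi(\kappa_j)^{-1}t_e\chi(\kappa_i)$ with the corresponding conjugation rule. Since the forest condition in Step 2 is what makes the assignment consistent, the order in which edges are processed never matters.

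\textbf{Step 3: LP isomorphism.} The main subtlety, and the step I expect to be the real obstacle, is that the balanced product is taken over $R_H$ while the rebasings use right multiplication on $\widetilde{A}$ and left multiplication on $\widetilde{B}$. To handle this, note that the maps $\phi_A:\widetilde{A}\to\widetilde{A}'$ and $\phi_B:\widetilde{B}\to\widetilde{B}'$ are $R_H$-linear chain isomorphisms of right and left modules respectively. Hence $\phi_A\otimes_{R_H}\phi_B$ is well defined on each $\widetilde{A}_i\otimes_{R_H}\widetilde{B}_j$ and is bijective, with inverse $\phi_A^{-1}\otimes\phi_B^{-1}$. It commutes with $\widetilde{\partial}_A\otimes I$ and $I\otimes\widetilde{\partial}_B$, so it gives an isomorphism of the LP chain complexes over $\F_2$.

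Concretely, on $\F_2$-bases this map sends $e_j h\otimes e_k$ to $e_j\chi(\kappa_j)^{-1}h\chi(\kappa_k)^{-1}\otimes e_k$ up to convention, which is a permutation of qubits and of checks. The resulting codes are therefore isomorphic CSS codes with identical parameters.
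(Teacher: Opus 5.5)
Your proof is correct and is essentially the paper's argument. The steps match one to one: per-vertex kernel relabelings ($q_v=\chi(\kappa_v)^{-1}$) transform labels as $t_e\mapsto q_i t_e q_j^{-1}$, the labels are fixed recursively along the spanning forest, and the balanced product carries the induced bijection $(v,u,h)\mapsto(v,u,q_v^A h\, q_u^B)$. The only presentational difference is the last step: you obtain this map as $\phi_A\otimes_{R_H}\phi_B$ of module chain isomorphisms, whereas the paper checks the $A$- and $B$-type edges by hand. One small slip in the $B$-part: with $e_j'=\chi(\kappa_j)e_j$ the labels actually transform as $\chi(\kappa_j)\,t_e\,\chi(\kappa_i)^{-1}$, which is harmless because the $\kappa$'s are arbitrary.
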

\begin{proof}
    \textbf{(1).}
    We first discuss gauge fixing. Consider the right-module complex $\widetilde{A}_\bullet$. Assign an element $q_v^A \in \ker\pi$ to each vertex $v$ of $\mathcal{P}_A$, and relabel the $H$-fiber over $v$ by  $(v,h)\longmapsto (v,q_v^Ah)$.
    For an edge $e:j\rightarrow i$ with lifted label $t_e^A$, this relabeling changes the edge label to
    \begin{align}\label{Eq:GaugeTransformA}
        {t_e^A}' = q_i^A t_e^A(q_j^A)^{-1}.
    \end{align}
    Choose a root in each connected component of $F_A$ and fix its fiber labeling. Proceeding outward along $F_A$, since $q_i^A,q_j^A\in\ker\pi$ and $\Im \chi\,{=}\,\ker \pi$, one can choose the labeling at each new vertex to remove $\chi(\gamma_e)$ factor in Eq.~\eqref{Eq:LiftedProtographVoltage}:
    \begin{align}
        {t_e^A}' = \sigma(s_e),
    \end{align}
    or equivalently $\gamma'_e=e_K$, on every tree edge. Since a spanning forest contains no cycles, these choices can be made recursively without consistency constraints.
    
    For the left-module complex $\widetilde{B}_\bullet$, independently assign an element $q_u^B \in \ker\pi$ to each vertex $u$ of $\mathcal{P}_B$, and relabel the corresponding fiber by $(u,h) \longmapsto (u,hq_u^B)$.
    For an edge $e:j\rightarrow i$ with lifted label $t_e^B$, this changes the edge label to
    \begin{align}\label{Eq:GaugeTransformB}
        {t_e^B}' = (q_j^B)^{-1}t_e^Bq_i^B.
    \end{align}
    The same recursive argument sets ${t_e^B}'=\sigma(s_e)$, or equivalently $\gamma'_e=e_K$, on every edge of $F_B$.

    \begin{algorithm*}[t]
\caption{Search for lifted codes from group extensions}
\label{alg:extension-lift-search}
\SetKw{Continue}{continue}
\SetKwFunction{Extensions}{GroupExtensions}
\SetKwFunction{EvaluateLift}{EvaluateLift}

\KwIn{A base LP code
$Q=A\otimes_{R_G} B$, with boundary maps
$\partial_A$ and $\partial_B$ over $R_G$;
selected kernels $\mathcal K$;
a search criterion $\mathsf{crit}$;
a distance-sampling budget $N_{\mathrm{dist}}$.}

\KwOut{For each selected group extension, the lifted code with the highest score.}

Write the entries of the two boundary maps as
\[
(\partial_A)_{ij}
=
\sum_{g\in G}a_{ij,g}g,
\qquad
(\partial_B)_{ij}
=
\sum_{g\in G}b_{ij,g}g.
\]

Define the sets of supported terms
\[
\mathcal S_A
=
\{(A,i,j,g):a_{ij,g}=1\},
\qquad
\mathcal S_B
=
\{(B,i,j,g):b_{ij,g}=1\},
\]
and set $\mathcal S=\mathcal S_A\sqcup\mathcal S_B$\;

Set $\mathcal R\gets[\,]$\;

\ForEach{$K\in\mathcal K$}{
    \ForEach{$E=(1\to K\xrightarrow{\chi}H\xrightarrow{\pi}G\to1)
    \in\Extensions(G,K)$}{

        Choose a section $\sigma:G\to H$ satisfying
        $\pi\circ\sigma=\operatorname{id}_G$\;

        Choose a spanning forest in the protograph Tanner graph of each
        constituent classical code\;

        Let $\Gamma_{\mathrm{gf}}(Q,E,\sigma)\subseteq K^{\mathcal S}$
        be the set of normalized kernel-label assignments satisfying
        $\gamma_e=e_K$ on every spanning-forest edge\;

        Set $\mathsf{best}\gets\varnothing$\;

        \ForEach{$\gamma=(\gamma_A,\gamma_B)
        \in\Gamma_{\mathrm{gf}}(Q,E,\sigma)$}{

            Lift every supported group element in $\partial_A$ and
            $\partial_B$ according to
            \[
            g\longmapsto
            \chi\!\left(\gamma_A(i,j,g)\right)\sigma(g)
            \quad\text{or}\quad
            g\longmapsto
            \chi\!\left(\gamma_B(i,j,g)\right)\sigma(g),
            \]
            respectively\;

            Construct the lifted boundary maps
            \[
            (\widetilde{\partial}_A)_{ij}
            =
            \sum_{g\in G}
            a_{ij,g}\,
            \chi\!\left(\gamma_A(i,j,g)\right)\sigma(g),
            \]
            \[
            (\widetilde{\partial}_B)_{ij}
            =
            \sum_{g\in G}
            b_{ij,g}\,
            \chi\!\left(\gamma_B(i,j,g)\right)\sigma(g),
            \]
            using the same left- and right-module conventions as in the
            base code\;

            Let $\widetilde A$ and $\widetilde B$ be the classical codes
            defined by $\widetilde{\partial}_A$ and
            $\widetilde{\partial}_B$, and construct
            \[
                \widetilde Q
                \gets
                \widetilde A\otimes_{R_H}\widetilde B.
            \]

            $\mathsf{cand}\gets
            \EvaluateLift(\widetilde Q,N_{\mathrm{dist}})$\;
            \tcp{\EvaluateLift constructs the binary CSS matrices, computes $n$ and $k$, 
            and estimates $\widetilde d_X$ and $\widetilde d_Z$.}

            \If{$\mathsf{cand}=\bot$}{
                \Continue\;
            }

            \If{$\mathsf{best}=\varnothing$ or
            $\mathsf{cand}\succ_{\mathsf{crit}}\mathsf{best}$}{
                $\mathsf{best}\gets\mathsf{cand}$\;
            }
        }

        \If{$\mathsf{best}\neq\varnothing$}{
            Append $(E,\mathsf{best})$ to $\mathcal R$\;
        }
    }
}

\Return{$\mathcal R$}\;
\end{algorithm*}

\begin{table*}[!t]
        \centering
    \begin{tabular}{c|c|c|c}
    Group & $\partial_A$ & $\partial_B$ & $\llbracket n,k,d\rrbracket$ \\
    \midrule
    
    $\Z_3\times \Z_3 = \langle x,y \mid x^3,y^3,[x,y]\rangle$
    &
    $\begin{pmatrix}
    1+x+x^2
    \end{pmatrix}$
    &
    $\begin{pmatrix}
    1+y+y^2
    \end{pmatrix}$
    &
    $\llbracket 18,8,2\rrbracket$
    \\
    
    $\Z_6\times \Z_3 = \langle x,y \mid x^6,y^3,[x,y]\rangle$
    &
    $\begin{pmatrix}
    1+x^5+x^4
    \end{pmatrix}$
    &
    $\begin{pmatrix}
    1+x^3y+y^2
    \end{pmatrix}$
    &
    $\llbracket 36,8,4\rrbracket$
    \\
    
    $\Z_9\times \Z_3 = \langle x,y \mid x^9,y^3,[x,y]\rangle$
    &
    $\begin{pmatrix}
    1+x+x^2
    \end{pmatrix}$
    &
    $\begin{pmatrix}
    1+y^2+yx^3
    \end{pmatrix}$
    &
    $\llbracket 54,8,6\rrbracket$
    \\
    $\Z_3\times \Z_3\times \Z_3 = \langle x,y,z \mid x^3,y^3,z^3,[x,y],[x,z],[y,z]\rangle$
    &
    $\begin{pmatrix}
    1+x^2+xz
    \end{pmatrix}$
    &
    $\begin{pmatrix}
    1+y^2+yz
    \end{pmatrix}$
    &
    $\llbracket 54,8,6\rrbracket$
    \\
    $\Z_6\times \Z_6 = \langle x,y \mid x^6,y^6,[x,y]\rangle$
    &
    $\begin{pmatrix}
    1+x^2y^3+x
    \end{pmatrix}$
    &
    $\begin{pmatrix}
    1+y^5+x^3y^4
    \end{pmatrix}$
    &
    $\llbracket 72,12,6\rrbracket$
    \\
    
    $\Z_{12}\times \Z_3 = \langle x,y \mid x^{12},y^3,[x,y]\rangle$
    &
    $\begin{pmatrix}
    1+x^8+x
    \end{pmatrix}$
    &
    $\begin{pmatrix}
    1+x^3y+y^2
    \end{pmatrix}$
    &
    $\llbracket 72,8,8\rrbracket$
    \\
    
    $\Z_{15}\times \Z_3 = \langle x,y \mid x^{15},y^3,[x,y]\rangle$
    &
    $\begin{pmatrix}
    1+x^{10}+x^{11}
    \end{pmatrix}$
    &
    $\begin{pmatrix}
    1+y^2+yx^3
    \end{pmatrix}$
    &
    $\llbracket 90,8,10\rrbracket$\\
    $\Z_{9}=\langle x \mid x^{9}\rangle$
        &
        $\begin{pmatrix}
        1+x^2 & x+x^{5} \\
        1+x^{4} & x^2+x^4
        \end{pmatrix}$
        &
        $\begin{pmatrix}
        1+x^{2} & x+x^{2} \\
        1+x & 1+x
        \end{pmatrix}$
        &
        $\llbracket 72,8,8\rrbracket$
    \\
    $\Z_{15}=\langle x \mid x^{15}\rangle$
    &
    $\begin{pmatrix}
    1+x & x^5+x^{13} \\
    1+x^{8} & x^7+x^8
    \end{pmatrix}$
    &
    $\begin{pmatrix}
    1+x^{13} & x+x^{5} \\
    1+x^{11} & x^3+x^5
    \end{pmatrix}$
    &
    $\llbracket 120,8,13\rrbracket$
    \\
    $\Z_{15}=\langle x \mid x^{15}\rangle$
    &
    $\begin{pmatrix}
    x^3+x^{14} & x^3+x^2 \\
    x^3+x^2 & x^{13}+x^2
    \end{pmatrix}$
    &
    $\begin{pmatrix}
    1+x^8 & x^9+x^7 \\
    1+x^2 & x^5+x^{12}
    \end{pmatrix}$
    &
    $\llbracket 120,8,14\rrbracket$
    \\
    $\Z_{21}=\langle x \mid x^{21}\rangle$
    &
    $\begin{pmatrix}
    x^{15} + x^{19} & x^{15} + x^{17}\\
    1 + x^{2} & x^{3} + x^{7}
    \end{pmatrix}$
    &
    $\begin{pmatrix}
    x^9+x^{19} & 1+x \\
    x^{12} + x^{13} & x^{6}+x^{18} 
    \end{pmatrix}$
    &
    $\llbracket 168,8,15\rrbracket$
    \\
    $\Z_{9}=\langle x \mid x^{9}\rangle$
    &
    $\begin{pmatrix}
    x+x^{2} & 0 & 1 + x^4 \\
    x+x^{5} & 1 + x & 0\\
    0 & 1 + x^5 & 1 + x^2
    \end{pmatrix}$
    &
    $\begin{pmatrix}
    1+x & 1+x^{2} & 0 \\
    0 & x+x^{2} & 1+x^2\\
    1 + x^2 & 0 & x + x^2
    \end{pmatrix}$
    &
    $\llbracket 162,18,8\rrbracket$
    \\
    \midrule
    \end{tabular}
\caption{\textbf{The boundary maps of some of the LP codes presented in the main text.}}
\label{table:boundary_maps}
\end{table*}

    \vspace{5pt} \noindent \textbf{(2).} It remains to show that the gauge fixing also induces an isomorphism of the corresponding balanced-product LP codes.
    
    Let $\widetilde{Q}$ and $\widetilde{Q}'$ denote the LP codes obtained from the classical complexes before and after gauge fixing, respectively. A vertex in any chain degree is labeled by a triple $(v,u,h)$
    where $v$ and $u$ index basis elements of the relevant $\widetilde{A}$- and $\widetilde{B}$-modules, respectively, and $h\in H$. The Tanner graph of $\widetilde{Q}$ contains two types of edges. An $A$-type edge has the form
    \begin{align}\label{Eq:AtypeEdgeinLP}
        (j,u,h)
        \longrightarrow
        (i,u,t_e^Ah),
    \end{align}
    whereas a $B$-type edge has the form
    \begin{align}\label{Eq:BtypeEdgeinLP}
        (v,j,h)
        \longrightarrow
        (v,i,ht_e^B).
    \end{align}
    The corresponding edges in $\widetilde{Q}'$ are defined analogously using ${t_e^A}'$ and ${t_e^B}'$.
    
    Define
    \begin{align}
        \Phi:
        V\bigl(\mathcal{T}(\widetilde{Q})\bigr)
        &\longrightarrow
        V\bigl(\mathcal{T}(\widetilde{Q}')\bigr), \nonumber \\
        (v,u,h)
        &\longmapsto
        \bigl(v,u,q_v^Ahq_u^B\bigr).
    \end{align}
    The map $\Phi$ is a bijection. We show that $\Phi$ preserves the edge incidence relations and thus it is a graph isomorphism.
    
    For the $A$-type edge in Eq.~\eqref{Eq:AtypeEdgeinLP}, one has
    \begin{align}
        \Phi(i,u,t_e^Ah)
        &=
        \bigl(i,u,q_i^At_e^Ahq_u^B\bigr) \nonumber \\
        &=
        \bigl(i,u,{t_e^A}'q_j^Ahq_u^B\bigr),
    \end{align}
    where the second equality follows from Eq.~\eqref{Eq:GaugeTransformA}. By definition, this is the endpoint of the $A$-type edge in $\mathcal{T}(\widetilde{Q}')$ whose initial vertex is $\Phi(j,u,h) = \bigl(j,u,q_j^Ahq_u^B\bigr)$.
    Thus, $\Phi$ preserves all $A$-type edges.
    
    Similarly, for the $B$-type edge in Eq.~\eqref{Eq:BtypeEdgeinLP},
    \begin{align}
        \Phi(v,i,ht_e^B)
        &=
        \bigl(v,i,q_v^Aht_e^Bq_i^B\bigr) \nonumber \\
        &=
        \bigl(v,i,q_v^Ahq_j^B{t_e^B}'\bigr),
    \end{align}
    where Eq.~\eqref{Eq:GaugeTransformB} was used in the second equality. This is the endpoint of the $B$-type edge in $\mathcal{T}(\widetilde{Q}')$ beginning at $\Phi(v,j,h) = \bigl(v,j,q_v^Ahq_j^B\bigr)$.
    Hence, $\Phi$ also preserves all $B$-type edges.
    
    Therefore, $\Phi$ is a graph isomorphism $\mathcal{T}(\widetilde{Q}) \simeq \mathcal{T}(\widetilde{Q}')$. The two LP codes are consequently related by permutations of the qubits and of the $X$- and $Z$-check generators and are therefore isomorphic.
\end{proof}

We note that the spanning forests in Proposition~\ref{Prop:GaugePermutationVoltage} belong to the protographs $\mathcal{P}_A$ and $\mathcal{P}_B$, rather than to the derived Tanner graphs $\mathcal{T}(A)$ and $\mathcal{T}(B)$.

Let $V_A,E_A$ and $V_B,E_B$ denote the vertex and edge sets of the two protographs, and let $c_A$ and $c_B$ denote their numbers of connected components. Before gauge fixing, there are $|K|^{|E_A|+|E_B|}$ possible kernel-label assignments. A spanning forest of $\mathcal{P}_A$ contains $|V_A|-c_A$ edges, while one of $\mathcal{P}_B$ contains $|V_B|-c_B$ edges. Gauge fixing these labels therefore reduces the search space to
\begin{align}
    |K|^{
        |E_A|-|V_A|+c_A
        +
        |E_B|-|V_B|+c_B
    }.
\end{align}
For the BB code in Eq.~\eqref{Eq:sample_BB}, each constituent protograph consists of two vertices connected by three parallel edges. One edge can therefore be gauge-fixed in each protograph. The combined search space is reduced from $|K|^6$ to $|K|^4$. More generally, if both constituent protographs are connected, the overall search space is reduced by a factor of $|K|^{n_0^A + n_1^A + n^B_0 + n^B_1-2}$, where $A_i \simeq R_G^{ n_i^A}$, and similarly for $B_i$.

\subsection{Boundary map of LP codes}\label{App:boundarymaps}
Here we provide the boundary maps for selected lifted codes, summarized in Table~\ref{table:boundary_maps}. We provide the boundary maps for simple cyclic groups where the ring admits a compact polynomial representation. For other codes, the boundary map representation using the \texttt{GAP} representation will be disclosed in a GitHub repository.

\subsection{The canonical logical operators in the gross code}\label{App:GrossLogical}
We first give explicit representatives of logical operators in the gross code $\llbracket144, 12, 12\rrbracket$. Let us first define the polynomials
\begin{align}\nonumber
    f &= 1 + x + x^2 + x^3 + x^6 + x^7 \\
    &\quad+ x^8 + x^9 + xy^3 + x^5y^3 +x^7y^3 + x^{11} y^3\\
    g &= x + x^2y + y^2 + xy^2 + x^2y^3 + y^4\\
    h&= 1 + y + xy + y^2 + y^3 + xy^3,
\end{align}
and a set of monomials
\begin{align}\label{Eq:GrossLogMonomialsA}
    A &= \{1, y, x^2y, x^2y^5, x^3y^2, x^4\}\\\label{Eq:GrossLogMonomialsB}
    B &= \{y, y^5, xy, 1, x^4, x^5y^2\}.
\end{align}
We can take the logical operators as
\begin{align}
    \overline{X}_i = \begin{pmatrix}
        a_i f\\
        0
    \end{pmatrix}, 
    \quad \overline{Z}_j = \begin{pmatrix}
        b_jh^\ast\\
        b_jg^\ast
    \end{pmatrix},
\end{align}
and
\begin{align}
    \overline{X}_{i + 6} = \begin{pmatrix}
        a_i g\\
        a_i h
    \end{pmatrix}, 
    \quad \overline{Z}_{j + 6} = \begin{pmatrix}
        0\\
        b_jf^\ast
    \end{pmatrix},
\end{align}
where $a_i \;[b_j]$ is the $i$-th [$j$-th] element of $A\;[B]$ in Eq.~\eqref{Eq:GrossLogMonomialsA} [\eqref{Eq:GrossLogMonomialsB}] for $i = 1, \dots, 6$ and $j = 1, \dots, 6$. These operators span a canonical basis of the logical operators of the gross code; i.e. $\overline{X}_k$ and $\overline{Z}_{k'}$ anticommute iff $k = k'$ for $k = 1, \dots, 12$ and $k' = 1,\dots, 12$.

\section{Calculation of coherent information}\label{App:SecCI}

We calculate the coherent information of a CSS code under independent $X$ errors occurring with probability $p$ on each qubit. An error configuration is represented by $\hat{\eta}\in\F_2^n$, where $\hat{\eta}_i=1$ indicates an error on qubit $i$. Its probability is
\begin{align} 
P(\hat{\eta})
&=
p^{\operatorname{wgt}(\hat{\eta})}
(1-p)^{n-\operatorname{wgt}(\hat{\eta})},
\end{align}
where $\operatorname{wgt}(\hat{\eta})$ denotes the Hamming weight of $\hat{\eta}$.

For a CSS code encoding $k$ logical qubits, the coherent information under $X$ errors can be written as~\cite{hlfh-86yz, PhysRevA.111.032402}
\begin{align}\label{Eq:CIunderDecoherence}
I_C
&=
k+
\sum_{\hat{\eta}\in\F_2^n}
P(\hat{\eta})
\log_2
\frac{
\sum_{\hat{v}\in\operatorname{Im}(H_X^T)}
P(\hat{\eta}+\hat{v})
}{
\sum_{\hat{u}\in\ker(H_Z)}
P(\hat{\eta}+\hat{u})
},
\end{align}
where $H_X$ and $H_Z$ are the $X$- and $Z$-check matrices, respectively, and all vector additions are taken over $\F_2$.

Let $L_X$ be a matrix whose rows represent a basis of logical $X$ operators, or equivalently a basis of the quotient $\ker(H_Z)/\operatorname{Im}(H_X^T)$. Choosing representatives of these logical classes gives
\begin{align}
\ker(H_Z)
&=
\operatorname{Im}(H_X^T)
\oplus
\operatorname{Im}(L_X^T).
\end{align}
If $r_X$ denotes the number of rows of $H_X$, every element of $\ker(H_Z)$ can therefore be expressed as
\begin{align}
\hat{u}
&=
H_X^T\hat{\tau}_C
+
L_X^T\hat{\tau}_L,
&
\hat{\tau}_C
&\in\F_2^{r_X},
&
\hat{\tau}_L
&\in\F_2^k.
\end{align}
When the rows of $H_X$ are linearly dependent, this parametrization is not unique. However, every element of $\operatorname{Im}(H_X^T)$ occurs with the same multiplicity, which cancels from the ratios below.

For a fixed error configuration $\hat{\eta}$, define the partition function of the logical sector $\hat{\ell}\in\F_2^k$ by
\begin{align} 
Z_{\hat{\eta}}(\hat{\ell})
&=
\sum_{\hat{\tau}_C\in\F_2^{r_X}}
P\left(
\hat{\eta}
+
H_X^T\hat{\tau}_C
+
L_X^T\hat{\ell}
\right).
\end{align}
The associated normalized logical-sector probability is
\begin{align}  \label{eq:logical_dist}
q_{\hat{\eta}}(\hat{\ell})
&=
\frac{
Z_{\hat{\eta}}(\hat{\ell})
}{
\sum_{\hat{\ell}'\in\F_2^k}
Z_{\hat{\eta}}(\hat{\ell}')
}.
\end{align}
Equation~\eqref{Eq:CIunderDecoherence} then becomes
\begin{align} 
I_C
&=
k+
\sum_{\hat{\eta}\in\F_2^n}
P(\hat{\eta})
\log_2 q_{\hat{\eta}}(\bm{0}).
\end{align}

We evaluate $q_{\hat{\eta}}(\bm{0})$ by mapping the sector partition functions to a classical spin model. Define
\begin{align}
M
&=
\left[
H_X^T
\middle|
L_X^T
\right], \quad \hat{\tau} =
\begin{pmatrix}
\hat{\tau}_C\\
\hat{\tau}_L
\end{pmatrix}
\in\F_2^{r_X+k}.
\end{align}
Then $M\hat{\tau} = H_X^T\hat{\tau}_C + L_X^T\hat{\tau}_L$. 
We associate binary variables $\hat{\cdot}$ with Ising variables ${\cdot}$ according to
\begin{align}
\eta_i
&=
1-2\hat{\eta}_i, \quad \tau_j = 1-2\hat{\tau}_j,
\end{align}
so that $\eta\in\{\pm1\}^n$ and $\tau\in\{\pm1\}^{r_X+k}$. For a fixed disorder realization $\eta$, consider the classical Hamiltonian
\begin{align}\label{Eq:MCHam}
\beta\mathcal H_{\eta}(\tau)
&=
-\beta J
\sum_{i=1}^{n}
\eta_i
\prod_{j:M_{ij}=1}
\tau_j,
\end{align}
with the Nishimori coupling
\begin{align}
\beta J
&=
\frac{1}{2}
\log\frac{1-p}{p}.
\end{align}
Using 
\begin{align}
    (-1)^{[\hat{\eta} + M \hat{\tau}]_i} = \eta_i \prod_{j: M_{ij} = 1} \tau_j,
\end{align}
we obtain
\begin{align}\label{P_error_given_syndrome}
P(\hat{\eta}+M\hat{\tau})
&= \bigl[p(1-p)\bigr]^{n/2}
e^{-\beta\mathcal H_{\eta}(\tau)}.
\end{align}
Thus, the error probability is proportional to the Boltzmann weight of the classical model, with a prefactor independent of $\tau$.

For each sampled error configuration $\hat{\eta}$, we perform an MCMC simulation of Eq.~\eqref{Eq:MCHam}. After thermalization, each sampled Ising configuration is converted to its binary representation $\hat{\tau}=\hat{\tau}_C\oplus\hat{\tau}_L$. The marginal distribution of $\hat{\tau}_L$ is precisely the logical-sector distribution in Eq.~\eqref{eq:logical_dist}. In particular,
\begin{align}
q_{\hat{\eta}}(\bm{0})
&=
\Pr_{\eta}\left(\hat{\tau}_L=\bm{0}\right).
\end{align}
If $N_{\mathrm{MC}}$ configurations are retained after thermalization and $N_0(\hat{\eta})$ of them satisfy $\hat{\tau}_L=\bm{0}$, then
\begin{align} 
\widehat q_{\hat{\eta}}(\bm{0})
&=
\frac{N_0(\hat{\eta})}{N_{\mathrm{MC}}}
\end{align}
estimates the ratio in Eq.~\eqref{eq:logical_dist}. Finally, drawing $N_{\eta}$ independent error configurations gives
\begin{align}\label{Eq:CIEstimator}
\widehat I_C
&=
k+
\frac{1}{N_{\eta}}
\sum_{a=1}^{N_{\eta}}
\log_2
\widehat q_{\hat{\eta}^{(a)}}(\bm{0}).
\end{align}
Convergence must be checked with respect to the thermalization time, the number of MCMC samples, and the number of disorder realizations. Because Eq.~\eqref{Eq:CIEstimator} contains the logarithm of an empirical frequency, the zero logical sector must be sampled sufficiently often for finite-sample bias to be controlled.

\bibliography{manuscript}

\end{document}